   \def\@citecolor{blue}%
   \def\@urlcolor{blue}%
   \def\@linkcolor{blue}%
\def\orcidID#1{\smash{\href{http://orcid.org/#1}{\protect\raisebox{-1.25pt}{\protect\includegraphics{./llncs/ORCID_Color.eps}}}}}
\def\sindex#1{}
\def\mtext#1{\mbox{~{#1}~}}
\def\calN#1{\mbox{$\cal N_{\mbox{\scriptsize{#1}}}\hskip .3mm$\/}}
\def\scalN{\mbox{\scriptsize \calN{}}}
\def\calF#1{\mbox{$\cal F_{\mbox{\scriptsize{#1}}}$\/}}
\def\calFN{\mbox{${\cal F}_{\scalN}$\/}}
\def\calP#1{\mbox{$\cal P_{\mbox{\scriptsize{#1}}}$\/}}
\def\calT#1{\mbox{$\cal T_{\mbox{\scriptsize{#1}}}\hskip .3mm$\/}}
\def\rmA#1{\mbox{${A}_{\mbox{\scriptsize{#1}}}$\/}}
\def\rmI#1{\mbox{${I}_{\mbox{\scriptsize{#1}}}$\/}}
\def\rmM#1{\mbox{${M}_{\mbox{\scriptsize{#1}}}$\/}}
\def\rmO#1{\mbox{${O}_{\mbox{\scriptsize{#1}}}$\/}}
\def\rmPl#1{\mbox{${Pl}_{\mbox{\scriptsize{#1}}}$\/}}
\def\rma#1{\mbox{${a}_{\mbox{\scriptsize{#1}}}$\/}}
\def\rmp#1{\mbox{${p}_{\mbox{\scriptsize{#1}}}$\/}}
\def\rmq#1{\mbox{${q}_{\mbox{\scriptsize{#1}}}$\/}}
\def\rmr#1{\mbox{${r}_{\mbox{\scriptsize{#1}}}$\/}}
\def\rmt#1{\mbox{${t}_{\mbox{\scriptsize{#1}}}$\/}}
\def\rmu#1{\mbox{${u}_{\mbox{\scriptsize{#1}}}$\/}}
\def\mathf#1{\mbox{$f_{\mbox{\scriptsize{#1}}}$\/}}
\def\arrow{\mbox{|$\!\rightarrow$\/}}
\def\larrow{\mbox{|$\!$|$\!\rightarrow$\/}}
\def\arrowu#1{\mbox{${\stackrel{\mbox{#1}}{\larrow}}$\/}}
\def\arrowd#1{\mbox{${\arrow}_{\mbox{\scriptsize{#1}}}$\/}}
\def\act{\mbox{\it act}}
\def\pre{\mbox{\it pre}}
\def\post{\mbox{\it post}}
\def\reach{\mbox{\it reach}}
\begin{document}
\title{Concurrent Hyperproperties}

\titlerunning{Concurrent Hyperproperties}

\author{
Bernd Finkbeiner\inst{1} \and
Ernst-R\"udiger Olderog\inst{2}
}

\authorrunning{
B.\ Finkbeiner and E.-R. Olderog
} 
% First names are abbreviated in the running head.
% If there are more than two authors, 'et al.' is used.
%
\institute{
CISPA Helmholtz Center for Information Security, Saarbr\"ucken, Germany\\
ORCID ID: 0000-0002-4280-8441 \\
\email{finkbeiner@cispa.de} 
\and
Carl von Ossietzky University of Oldenburg, Oldenburg, Germany\\
 ORCID ID: 0000-0002-3600-2046 \\
\email{olderog@informatik.uni-oldenburg.de}
}

\maketitle

\begin{abstract}
Trace properties, which are sets of execution traces,
are often used to analyze systems, but their expressiveness is limited.
Clarkson and Schneider defined \emph{hyperproperties} 
as a generalization of trace properties to sets of sets of traces. 
Typical applications of hyperproperties are found in information flow security.
We introduce an analogous definition of \emph{concurrent} hyperproperties, 
by generalizing traces to \emph{concurrent} traces, 
which we define as partially ordered multisets. 
We take Petri nets as the basic semantic model.
Concurrent traces are formalized via causal nets.
To check concurrent hyperproperties, we define \emph{may} and \emph{must testing} 
of sets of concurrent traces in the style of DeNicola and Hennessy, 
using the parallel composition of Petri nets.
In our approach, we thus distinguish nondeterministic and concurrent behavior.
We discuss examples where concurrent hyperproperties are needed.
\end{abstract}

\begin{keywords}
Hyperproperties, concurrent traces, Petri nets, may and must testing.
\end{keywords}

\section{Introduction}
\label{sec:introduction}

Among the most fundamental debates in the theory of concurrency is the
distinction between \emph{interleaving} semantics in the style of
Milner~\cite{milner_calculus_1980} and Hoare~\cite{Hoare80a}, and
\emph{partial-order} (or \emph{true concurrency}) semantics following
the work of Petri~\cite{Pet77}, Mazurkiewicz~\cite{Mazurkiewicz_1977},
and Winskel~\cite{UCAM-CL-TR-95}. In interleaving semantics,
concurrency is reduced to its sequential nondeterministic simulation;
in partial-order semantics, concurrency is modeled as causal
independence.

In this paper, we revisit this classic debate in the modern setting of
\emph{hyperproperties}. Clarkson and Schneider defined hyperproperties
as a generalization of trace properties, which are sets of traces, to
\emph{sets of} sets of traces~\cite{ClaSch10}.  
Hyperproperties are a powerful class of linear-time properties that can express many notions related to information flow, symmetry, robustness, and causality. A typical example is \emph{noninterference}~\cite{GoguenMeseguer:1982:NI}, which is one of the most well-studied information-flow security policies. Noninterference requires that for all computations and for all sequences of actions of a high-security agent $A$, the resulting observations made by a low-security observer $B$ are identical to $B$’s observations that would result without $A$’s actions.
While trace properties express properties of individual
executions, hyperproperties express properties of sets of traces.
This makes it possible to relate different executions, for example by
requiring that certain observations are the same, without necessarily
restricting the events on individual executions.

Since hyperproperties refer to traces, they are, at least in
principle, immediately applicable to concurrent systems with
interleaving semantics. However, the interleaving semantics leads to a
fundamental problem, which we will illustrate with a sequence of
example systems given as the Petri nets shown in Fig.~\ref{fig:example-systems}.
We employ the usual graphical representation of Petri nets:
circles represent places and boxes represent transitions that are connected to places via
directed arcs. In our setting, transitions are labeled by action symbols like $h_1$ and $h_2$.
Black dots represent tokens, which represent the current points of
activity. The simultaneous presence of several tokens models concurrent activities.
The dynamic behavior of a Petri net is modeled by its token game
that defines how tokens can move inside the net. A transition is enabled
if all places connected to it with an ingoing arc carry a token.
Firing the transition moves these tokens to the places connected
to it with an outgoing arc. Branching from a place models nondeterministic choice,
whereas branching from a transition models the start of a concurrent execution.
As an example, consider the net $\mathcal N_C$ shown on the right in Fig.~\ref{fig:example-systems}.
From the initial place $p_0$, there is a nondeterministic choice between the transitions
labeled with $h_1$ and $h_2$. Firing transition $h_1$ concurrently enables the
transitions labeled with $l_1$ and $l_2$, whereas firing transition $h_2$ enables in place $p_{13}$
the nondeterministic choice between the transitions $l_1$ and $l_2$.
For more details on Petri nets we refer to Section~\ref{sec:nets}.

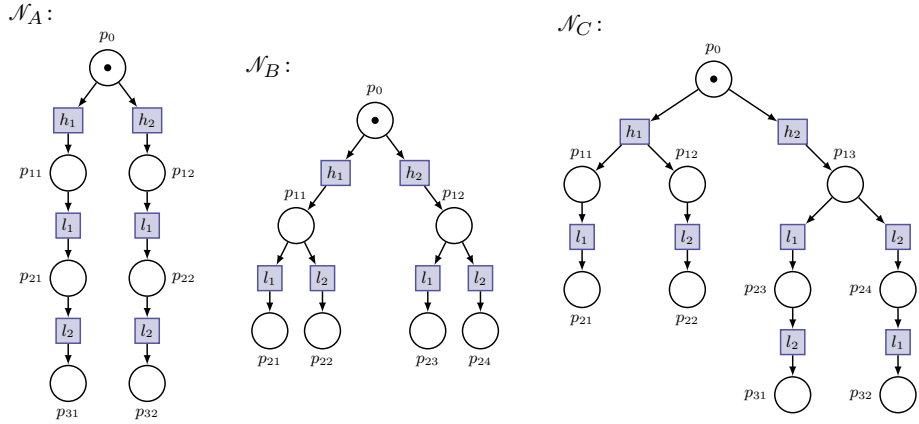
\begin{figure}[t]

\begin{minipage}{2.5cm}

$\calN{$A$}$:

\scalebox{0.7}{
  \begin{tikzpicture}

% Place 0
\node[place,
    tokens=1,
    label=$p_0$] (p0) at (2.5,6) {};
 
% Place 11
\node[place,
    label=left:$p_{11}$] (p11) at (1.75,4) {};
    
% Place 12
\node[place,
    label=right:$p_{12}$] (p12) at (3.25,4) {};
    
% Place 21
\node[place,
    label=left:$p_{21}$] (p21) at (1.75,2) {};
    
% Place 22
\node[place,
    label=right:$p_{22}$] (p22) at (3.25,2) {};

% Place 31
\node[place,
    label=below:$p_{31}$] (p31) at (1.75,0) {};
    
% Place 32
\node[place,
    label=below:$p_{32}$] (p32) at (3.25,0) {};

% Transition 1
\node[transition,
    label=] (t01) at (1.75,5) {$h_1$};
       
% Transition 2
\node[transition,
    label=] (t02) at (3.25,5) {$h_2$};
 
% Transition 3
\node[transition,
    label=] (t11) at (1.75,3) {$l_1$};
 
% Transition 4
\node[transition,
    label=] (t12) at (3.25,3) {$l_1$};
    
% Transition 5
\node[transition,
    label=] (t21) at (1.75,1) {$l_2$};
 
% Transition 6
\node[transition,
    label=] (t22) at (3.25,1) {$l_2$};
 
% Connect P-T
\draw[-latex,thick] (p0) -- (t01);
\draw[-latex,thick] (p0) -- (t02);

\draw[-latex,thick] (p11) -- (t11);
\draw[-latex,thick] (p12) -- (t12);
\draw[-latex,thick] (p21) -- (t21);
\draw[-latex,thick] (p22) -- (t22);
 
% Connect T-P
\draw[-latex,thick] (t01) -- (p11);
\draw[-latex,thick] (t02) -- (p12);
\draw[-latex,thick] (t11) -- (p21);
\draw[-latex,thick] (t12) -- (p22);
\draw[-latex,thick] (t21) -- (p31);
\draw[-latex,thick] (t22) -- (p32);
 
\end{tikzpicture}
}	
\end{minipage}
\begin{minipage}{3.5cm}

$\calN{$B$}$:

\scalebox{0.7}{
  \begin{tikzpicture}

% Place 0
\node[place,
    tokens=1,
    label=$p_0$] (p0) at (2.5,6) {};
 
% Place 11
\node[place,
    label=$p_{11}$] (p11) at (1,4) {};
    
% Place 12
\node[place,
    label=$p_{12}$] (p12) at (4,4) {};
    
% Place 21
\node[place,
    label=below:$p_{21}$] (p21) at (0.5,2) {};
% Place 22
\node[place,
    label=below:$p_{22}$] (p22) at (1.5,2) {};

% Place 23
\node[place,
    label=below:$p_{23}$] (p23) at (3.5,2) {};

% Place 24
\node[place,
    label=below:$p_{24}$] (p24) at (4.5,2) {};
    
% Transition 1
\node[transition,
    label=] (t01) at (1.75,5) {$h_1$};
       
% Transition 2
\node[transition,
    label=] (t02) at (3.25,5) {$h_2$};
 
% Transition 3
\node[transition,
    label=] (t11) at (0.5,3) {$l_1$};
 
% Transition 4
\node[transition,
    label=] (t12) at (1.5,3) {$l_2$};
    
% Transition 5
\node[transition,
    label=] (t13) at (3.5,3) {$l_1$};
 
% Transition 6
\node[transition,
    label=] (t14) at (4.5,3) {$l_2$};
 
% Connect P-T
\draw[-latex,thick] (p0) -- (t01);
\draw[-latex,thick] (p0) -- (t02);

\draw[-latex,thick] (p11) -- (t11);
\draw[-latex,thick] (p11) -- (t12);
\draw[-latex,thick] (p12) -- (t13);
\draw[-latex,thick] (p12) -- (t14);
 
% Connect T-P
\draw[-latex,thick] (t01) -- (p11);
\draw[-latex,thick] (t02) -- (p12);
\draw[-latex,thick] (t11) -- (p21);
\draw[-latex,thick] (t12) -- (p22);
\draw[-latex,thick] (t13) -- (p23);
\draw[-latex,thick] (t14) -- (p24);
 
\end{tikzpicture}
}	
\end{minipage}
\begin{minipage}{4.9cm}

$\calN{$C$}$:

\scalebox{0.7}{
  \begin{tikzpicture}

% Place 00
\node[place,
    tokens=1,
    label=$p_0$] (p0) at (2.5,6) {};
 
% Place 01
\node[place,
    label=$p_{11}$] (p11) at (0,4) {};

% Place 0
\node[place,
    label=$p_{12}$] (p12) at (2,4) {};
 
% Place 1
\node[place,
    label=$p_{13}$] (p13) at (5,4) {};

% Place 2
\node[place,
    label=below:$p_{21}$] (p21) at (0,2) {};
    
% Place 3
\node[place,
    label=below:$p_{22}$] (p22) at (2,2) {};
    
% Place 3
\node[place,
    label=left:$p_{23}$] (p23) at (4,2) {};

% Place 3
\node[place,
    label=left:$p_{24}$] (p24) at (6,2) {};

% Place 3
\node[place,
    label=left:$p_{31}$] (p31) at (4,0) {};
    
% Place 3
\node[place,
    label=left:$p_{32}$] (p32) at (6,0) {};

% Transition 1
\node[transition,
    label=] (t01) at (1,5) {$h_1$};

% Transition 2
\node[transition,
    label=] (t02) at (4,5) {$h_2$};

% Transition 3
\node[transition,
    label=] (t11) at (0,3) {$l_1$};

% Transition 4
\node[transition,
    label=] (t12) at (2,3) {$l_2$};
    
% Transition 5
\node[transition,
    label=] (t13) at (4,3) {$l_1$};
 
% Transition 6
\node[transition,
    label=] (t14) at (6,3) {$l_2$};
 
% Transition 7
\node[transition,
    label=] (t21) at (4,1) {$l_2$};
 
% Transition 8
\node[transition,
    label=] (t22) at (6,1) {$l_1$};

% Connect P-T
\draw[-latex,thick] (p0) -- (t01);
\draw[-latex,thick] (p0) -- (t02);

\draw[-latex,thick] (p11) -- (t11);
\draw[-latex,thick] (p12) -- (t12);
\draw[-latex,thick] (p13) -- (t13);
\draw[-latex,thick] (p13) -- (t14);

\draw[-latex,thick] (p23) -- (t21);
\draw[-latex,thick] (p24) -- (t22);

% Connect T-P
\draw[-latex,thick] (t01) -- (p11);
\draw[-latex,thick] (t01) -- (p12);

\draw[-latex,thick] (t02) -- (p13);

\draw[-latex,thick] (t11) -- (p21);
\draw[-latex,thick] (t12) -- (p22);
\draw[-latex,thick] (t13) -- (p23);
\draw[-latex,thick] (t14) -- (p24);
\draw[-latex,thick] (t21) -- (p31);
\draw[-latex,thick] (t22) -- (p32);
 
\end{tikzpicture}
}	
\end{minipage}
\begin{minipage}{1.3cm}
\hspace{1.5cm}
\end{minipage}

\caption{Three example systems given as Petri nets.}
\label{fig:example-systems}

\end{figure}

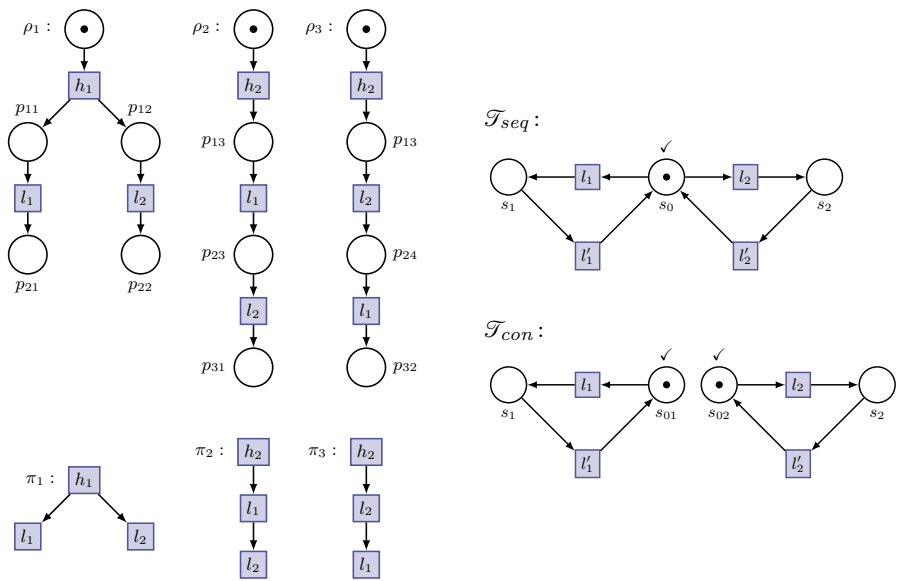
\begin{figure}[t]

\begin{minipage}{4.9cm}
\hspace{2mm}\scalebox{0.75}{
  \begin{tikzpicture}

% Place 00 of run 1
\node[place,
    tokens=1,
    label=left: $\rho_1:\ $] (p01) at (1,6) {};
    
% Place 00 of run 2
\node[place,
    tokens=1,
    label=left: $\rho_2:\ $] (p02) at (4,6) {};

% Place 00 of run 3
\node[place,
    tokens=1,
    label=left: $\rho_3:\ $] (p03) at (6,6) {};
 
% Place 01
\node[place,
    label=$p_{11}$] (p11) at (0,4) {};

% Place 0
\node[place,
    label=$p_{12}$] (p12) at (2,4) {};
 
% Place 1 of run 2
\node[place,
    label=left:$p_{13}$] (p132) at (4,4) {};
    
% Place 1 of run 3
\node[place,
    label=right:$p_{13}$] (p133) at (6,4) {};
        
% Place 2
\node[place,
    label=below:$p_{21}$] (p21) at (0,2) {};
    
% Place 3
\node[place,
    label=below:$p_{22}$] (p22) at (2,2) {};
    
% Place 3
\node[place,
    label=left:$p_{23}$] (p23) at (4,2) {};

% Place 3
\node[place,
    label=right:$p_{24}$] (p24) at (6,2) {};

% Place 3
\node[place,
    label=left:$p_{31}$] (p31) at (4,0) {};
    
% Place 3
\node[place,
    label=right:$p_{32}$] (p32) at (6,0) {};

%% trace 1:

\node[transition,
    label=left: $\pi_1:\ $] (tr10) at (1,-2) {$h_1$};
\node[transition,
    label=] (tr11) at (0,-3) {$l_1$};
\node[transition,
    label=] (tr12) at (2,-3) {$l_2$};
\draw[-latex,thick] (tr10) -- (tr11);
\draw[-latex,thick] (tr10) -- (tr12);

%% trace 2:

\node[transition,
    label=left: $\pi_2:\ $] (tr20) at (4,-1.5) {$h_2$};
\node[transition,
    label=] (tr21) at (4,-2.5) {$l_1$};
\node[transition,
    label=] (tr22) at (4,-3.5) {$l_2$};
\draw[-latex,thick] (tr20) -- (tr21);
\draw[-latex,thick] (tr21) -- (tr22);

%% trace 2:

\node[transition,
    label=left: $\pi_3:\ $] (tr30) at (6,-1.5) {$h_2$};
\node[transition,
    label=] (tr31) at (6,-2.5) {$l_2$};
\node[transition,
    label=] (tr32) at (6,-3.5) {$l_1$};
\draw[-latex,thick] (tr30) -- (tr31);
\draw[-latex,thick] (tr31) -- (tr32);

% Transition 1
\node[transition,
    label=] (t01) at (1,5) {$h_1$};

% Transition 2 of run 2
\node[transition,
    label=] (t022) at (4,5) {$h_2$};
    
% Transition 2 of run 3
\node[transition,
    label=] (t023) at (6,5) {$h_2$};

% Transition 3
\node[transition,
    label=] (t11) at (0,3) {$l_1$};

% Transition 4
\node[transition,
    label=] (t12) at (2,3) {$l_2$};
    
% Transition 5
\node[transition,
    label=] (t13) at (4,3) {$l_1$};
 
% Transition 6
\node[transition,
    label=] (t14) at (6,3) {$l_2$};
 
% Transition 7
\node[transition,
    label=] (t21) at (4,1) {$l_2$};
 
% Transition 8
\node[transition,
    label=] (t22) at (6,1) {$l_1$};

% Connect P-T
\draw[-latex,thick] (p01) -- (t01);
\draw[-latex,thick] (p02) -- (t022);
\draw[-latex,thick] (p03) -- (t023);

\draw[-latex,thick] (p11) -- (t11);
\draw[-latex,thick] (p12) -- (t12);
\draw[-latex,thick] (p132) -- (t13);
\draw[-latex,thick] (p133) -- (t14);

\draw[-latex,thick] (p23) -- (t21);
\draw[-latex,thick] (p24) -- (t22);

% Connect T-P
\draw[-latex,thick] (t01) -- (p11);
\draw[-latex,thick] (t01) -- (p12);

\draw[-latex,thick] (t022) -- (p132);
\draw[-latex,thick] (t023) -- (p133);

\draw[-latex,thick] (t11) -- (p21);
\draw[-latex,thick] (t12) -- (p22);
\draw[-latex,thick] (t13) -- (p23);
\draw[-latex,thick] (t14) -- (p24);
\draw[-latex,thick] (t21) -- (p31);
\draw[-latex,thick] (t22) -- (p32);
 
\end{tikzpicture}
}
\end{minipage}
\begin{minipage}{1.5cm}
\hspace{1.5cm}
\end{minipage}
\begin{minipage}{2.5cm}

%%%%%%%%%%%%%%%%%%%%%%%%%%%%%%%%%%%%%%%%%%%%%%%%%%%%%%%%%%%
 \calT{$\mathit{seq}$}:  % sequential test
%%%%%%%%%%%%%%%%%%%%%%%%%%%%%%%%%%%%%%%%%%%%%%%%%%%%%%%%%%%

\scalebox{0.7}{
  \begin{tikzpicture}

% Place s1
\node[place,
    label=below:$s_1$] (p91) at (0,9) {};
    
% Transition 1
\node[transition,
    label=] (t91) at (1.5,9) {$l_1$};
    
% Place s0
\node[place,
    tokens=1,
    label=above:$\tick$,label=below:$s_0$] (p92) at (3,9) {};
    
% Transition 2
\node[transition,
    label=] (t92) at (4.5,9) {$l_2$};
    
% Place s2
\node[place,
    label=below:$s_2$] (p93) at (6,9) {};

% Transition 3
\node[transition,
    label=] (t72) at (1.5,7.5) {$l'_1$};

% Transition 4
\node[transition,
    label=] (t73) at (4.5,7.5) {$l'_2$};

%%%%%%%%%%%%%%%%%%%%%%%%%%%%%%%%%%%%%%%%%%%%%%%%%
% Connect P-T
%%%%%%%%%%%%%%%%%%%%%%%%%%%%%%%%%%%%%%%%%%%%%%%%%

\draw[-latex,thick] (p92) -- (t91);
\draw[-latex,thick] (p92) -- (t92);

\draw[-latex,thick] (p91) -- (t72);
\draw[-latex,thick] (p93) -- (t73);

%%%%%%%%%%%%%%%%%%%%%%%%%%%%%%%%%%%%%%%%%%%%%%%%%
% Connect T-P
%%%%%%%%%%%%%%%%%%%%%%%%%%%%%%%%%%%%%%%%%%%%%%%%%

\draw[-latex,thick] (t91) -- (p91);
\draw[-latex,thick] (t92) -- (p93);

\draw[-latex,thick] (t72) -- (p92);
\draw[-latex,thick] (t73) -- (p92);

\end{tikzpicture}
}

\vspace{5mm}

%%%%%%%%%%%%%%%%%%%%%%%%%%%%%%%%%%%%%%%%%%%%%%%%%%%%%%%%%%%
 \calT{$\mathit{con}$}:   % concurrent test
%%%%%%%%%%%%%%%%%%%%%%%%%%%%%%%%%%%%%%%%%%%%%%%%%%%%%%%%%%%

\scalebox{0.7}{
  \begin{tikzpicture}

% Place s1
\node[place,
    label=below:$s_1$] (p91) at (0,9) {};
    
% Transition 1
\node[transition,
    label=] (t91) at (1.5,9) {$l_1$};
    
% Place s01
\node[place,
    tokens=1,
    label=above:$\tick$,label=below:$s_{01}$] (p921) at (3,9) {};
    
% Place s02
\node[place,
    tokens=1,
    label=above:$\tick$,label=below:$s_{02}$] (p922) at (4,9) {};
    
% Transition 2
\node[transition,
    label=] (t92) at (5.5,9) {$l_2$};
    
% Place s2
\node[place,
    label=below:$s_2$] (p93) at (7,9) {};

% Transition 3
\node[transition,
    label=] (t72) at (1.5,7.5) {$l'_1$};

% Transition 4
\node[transition,
    label=] (t73) at (5.5,7.5) {$l'_2$};

%%%%%%%%%%%%%%%%%%%%%%%%%%%%%%%%%%%%%%%%%%%%%%%%%
% Connect P-T
%%%%%%%%%%%%%%%%%%%%%%%%%%%%%%%%%%%%%%%%%%%%%%%%%

\draw[-latex,thick] (p921) -- (t91);
\draw[-latex,thick] (p922) -- (t92);

\draw[-latex,thick] (p91) -- (t72);
\draw[-latex,thick] (p93) -- (t73);

%%%%%%%%%%%%%%%%%%%%%%%%%%%%%%%%%%%%%%%%%%%%%%%%%
% Connect T-P
%%%%%%%%%%%%%%%%%%%%%%%%%%%%%%%%%%%%%%%%%%%%%%%%%

\draw[-latex,thick] (t91) -- (p91);
\draw[-latex,thick] (t92) -- (p93);

\draw[-latex,thick] (t72) -- (p921);
\draw[-latex,thick] (t73) -- (p922);

\end{tikzpicture}

}	
\end{minipage}

\caption{
  \emph{Left}: The three maximal runs $\rho_1, \rho_2$ and $\rho_3$ of $\mathcal N_C$ from Fig.~\ref{fig:example-systems}, resulting by resolving every nondeterministic choice in $\mathcal N_C$,
  and their corresponding concurrent traces $\pi_1, \pi_2$ and $\pi_3$.
  \emph{Right}: A sequential test \calT{$\mathit{seq}$} for the concurrent hyperproperty
  that every pair of concurrent traces $\pi$ and $\pi'$ must agree on the occurrence \emph{and} sequential ordering of the low-security events $l_1$ and $l_2$. 
  In the test, the events $l_1$ and $l_2$ refer to $\pi$ and $l'_1$ and $l'_2$ to $\pi'$. The place marked with the symbol $\tick$ notifies
  a successful test. Below is a  concurrent test \calT{$\mathit{con}$} for the weaker concurrent hyperproperty
  that every pair of concurrent traces $\pi$ and $\pi'$ must agree on the occurrence of the low-security events $l_1$ and $l_2$,
  but not on their sequential ordering.
  For instance, each each $l_1$ must be matched by $l'_1$ before the next $l_1$ can occur, but $l_2$ may occur in between $l_1$ and $l'_1$.
}
\label{fig:example-test}

\end{figure}

For a start, consider the system $\mathcal N_A$ shown on the left in
Fig.~\ref{fig:example-systems}. We are interested in the secrecy property that the
system's low-security behavior, as observable in the low-security
events $l_1$ and $l_2$, is not affected by the
high-security events $h_1$ and $h_2$.  Our system is secure. This is captured by
the hyperproperty that \emph{all} traces must agree on the occurrences and the ordering 
of $l_1$ and $l_2$, and indeed, the system has only two traces, $h_1 \cdot l_1
\cdot l_2$ and $h_2 \cdot l_1 \cdot l_2$, which, when projected to
$\{l_1, l_2\}$, both result in the same sequence $l_1 \cdot l_2$ of
low-security events.

Next, consider system $\mathcal N_B$ shown in the middle in
Fig.~\ref{fig:example-systems}.  
Informally, the system is still secure in the sense that an observer who sees only $l_1$ and $l_2$ cannot distinguish the situation where $h_1$ has occurred from the situation where $h_2$ has occurred.
However, our previous hyperproperty is violated.  The system has four traces: 
$h_1 \cdot l_1$, $h_1 \cdot l_2$, $h_2\cdot l_1$, and $h_2\cdot l_2$,
which, when projected to $\{l_1, l_2\}$, result in two different
traces, $l_1$ and $l_2$. This issue is due to the nondeterministic choice between $l_1$ and $l_2$, and can be addressed with possibilistic information-flow
properties like \emph{generalized noninterference}~\cite{McCullough:1987:GNI}. Generalized
noninterference is weaker than normal noninterference: it requires that for every pair of traces $\pi, \pi'$
\emph{there exists} another trace $\pi''$, such that (1) $\pi''$
agrees with $\pi$ on the low-security events $\{l_1, l_2\}$ and (2)
$\pi''$ agrees with $\pi'$ on the high-security events $\{h_1,
h_2\}$. Generalized noninterference is satisfied in $\mathcal N_B$. For
example, for $\pi=h_1 \cdot l_1$ and $\pi'=h_2\cdot l_2$, there exists
$\pi''= h_2\cdot l_1$, which agrees with $\pi$ on $\{l_1, l_2\}$ and
with $\pi'$ on $\{h_1, h_2\}$.

Finally, consider the \emph{concurrent} system $\mathcal N_C$ shown on the right in Fig.~\ref{fig:example-systems}.
With the interpretation of concurrency as nondeterministic
interleaving, the system has the four traces $h_1\cdot l_1 \cdot l_2$,
$h_1\cdot l_2 \cdot l_1$, $h_2\cdot l_1 \cdot l_2$, and $h_2\cdot l_2
\cdot l_1$. Generalized noninterference is satisfied. However, the system is clearly not secure, because $h_1$
causes concurrent behavior, while $h_2$ causes sequential behavior. In
a concurrent setting, this difference could be recognized by an
attacker, who might, for example, synchronize with the system on a
particular ordering, such as $l_1 \cdot l_2$. In a trace that begins
with $h_1$, this will always work, while in traces that begin with
$h_2$, the attacker might observe a deadlock when the system performs
the order $l_2 \cdot l_1$.

In the security literature, this phenomenon has lead to the study of
\emph{branching-time} information-flow properties based on various
notions of (bi-)simulation (cf.~\cite{busi_gorrieri_2009}). Often, however, such equivalences are too
fine-grained, because they expose the point in time when an internal
decision is made. Linear-time properties, and, hence, hyperproperties
abstract from such implementation details. Can hyperproperties
nevertheless recognize the difference between concurrent and
sequential behavior?

 In this paper, we propose \emph{concurrent hyperproperties} as a
 positive answer to this question. Hyperproperties are based on the
 partial-order interpretation of concurrency. We stick to Clarkson and
 Schneider's definition of hyperproperties as sets of sets of traces,
 but generalize traces to \emph{concurrent} traces, which we define as
 partially ordered multisets (pomsets). 
 Figure~\ref{fig:example-test} shows the three maximal
 runs $\rho_1, \rho_2$ and $\rho_3$ of system $\mathcal N_C$ 
 and their corresponding concurrent traces. 
 In a run, every nondeterministic choice has been resolved, but concurrent executions
 remain visible, like the concurrency of the transitions labeled with $l_1$ and $l_2$ in
 $\rho_1$.
 The concurrency of run $\rho_1$ is reflected in the partial order of the concurrent trace~$\pi_1$.
 Note that $\mathcal N_C$ has four traces under the interleaving semantics
 (corresponding to the two nondeterministic choices and the two
 possible interleavings) but only three concurrent traces, because the concurrent execution is not
 resolved by nondeterminism. 
 Since the concurrency is still present in the concurrent traces, 
 a concurrent hyperproperty can distinguish nondeterminism from concurrency.
 Continuing our example, we can now specify secrecy in concurrent systems
 like $\mathcal N_C$  as the concurrent
 hyperproperty where every pair of concurrent traces agrees on the
 occurrence and ordering of the low-security events. Our example system clearly violates this requirement.
 
 In the paper, we give a formal definition of concurrent
 hyperproperties and then provide an explicit mechanism for describing
 concurrent hyperproperties. We base this mechanism the concept of
 \emph{testing processes} due to DeNicola and Hennessy
 \cite{DH84,Hen88}.  There the interaction of a (nondeterministic)
 process and a user is explicitly formalized using a synchronous
 parallel composition.  The user is formalized by a \emph{test}, which
 is a process with some states marked as a \emph{success}.  
 It is defined when a process \emph{may} pass a test and 
 when it \emph{must} pass a test.
 We transfer the concept of testing to concurrent traces. A concurrent hyperproperty
 is given as a test that has interactions with multiple concurrent
 runs. The test is successful for a given set of concurrent traces
 if it succeeds for all combinations of concurrent traces from the set.

 For our example, such a test \calT{$\mathit{seq}$} is
 shown on the right in Fig.~\ref{fig:example-test}. 
 It can interact with any two of the runs $\rho_1, \rho_2, \rho_3$
 corresponding to any two of the traces $\pi_1, \pi_2, \pi_3$ of \calN{$C$}.
 The interaction is via parallel composition that synchronizes on all transitions
 with the same label. To this end, the first run under test keeps the original labels
 $l_1$ and $l_2$, whereas the second run uses primed copies $l_1'$ and $l_2'$
 of these labels. 
 Thus \calT{$\mathit{seq}$} allows for both
 possible orderings ($l_1$ then $l_2$, and $l_2$ then $l_1$) in the
 first trace and enforces that the second trace exhibits the same order.  
 When \calT{$\mathit{seq}$} is applied to the runs of the 
 concurrent system \calN{$C$} shown on the left 
 of Fig.~\ref{fig:example-test}, it turns out that they may not pass this test,
 for instance, when $\rho_1$ and $\rho_3'$, i.e., $\rho_3$ with primed labels,
 are tested for the sequence $l_1 \cdot l_1' \cdot l_2 \cdot l_2'$, 
 this leads to a deadlock after $l_1$.
 This shows that the concurrent system \calN{$C$} does not satisfy the
 concurrent hyperproperty. We will examine this in more detail in Section~\ref{sec:examples}.
 
 The test \calT{$\mathit{con}$} checks a weaker concurrent hyperproperty,
 namely that each occurrence of $l_1$ is matched by an occurrence of $l'_1$
 before the next occurrence of $l_1$, and similarly for $l_2$ and $l'_2$,
 but $l_2$ may occur in between $l_1$ and $l'_1$.
 When \calT{$\mathit{con}$} is applied to any two of the runs $\rho_1, \rho_2, \rho_3$
 shown on the left of Fig.~\ref{fig:example-test}, it turns out that they must pass this test.
 This shows that the concurrent system \calN{$C$} satisfies this weaker
 concurrent hyperproperty. For more details see Section~\ref{sec:examples}.

Our paper is organized as follows.
In Section~\ref{sec:conc-hyper} we define the notion of concurrent hyperproperties
and give examples of ascending sophistication.
In Section~\ref{sec:nets} we recall the basic concepts from Petri nets
that we take as our semantic model of concurrent systems. In particular,
we define concurrent runs and the parallel composition of nets.
In Section~\ref{sec:testing} we adapt the concept of testing developed
by DeNicola and Hennessy to the setting of Petri nets.
In Section~\ref{sec:examples} we discuss how various examples of
concurrent hyperproperties can be tested.
In Section~\ref{sec:decidability} we briefly discuss the decidability
of universal must testing and establish an undecidability result for universal may testing.
In Section~\ref{sec:conclusion} we conclude the paper.

\paragraph{Dedication.} 
We dedicate our paper to Jifeng He on the occasion of his 80th birthday.
Jifeng has made many contributions to 
formalizing and relating different semantic models of computing, as exemplified
in his book `Unifying Theories of Programming' with Tony Hoare~\cite{HoaHe98}.
Out of this work grew also Jifeng's interest in testing
\cite{AichHe07,SuFPHeS15,SuWMPHeCS17}, the concept that we employ for hyperproperties
in this paper, although in an abstract setting of testing processes
as introduced by DeNicola and Hennessy.
The second author has very pleasant memories of the close cooperation with Jifeng
within the 
EU Basic Research Action  ProCoS (Provably Correct Systems) during the period 1989--1995~\cite{ProCoS94},
and of various scientific meetings, in particular in Oxford, Oldenburg, and Shanghai.

\section{Concurrent Hyperproperties}
\label{sec:conc-hyper}

Clarkson and Schneider defined \emph{hyperproperties} as a generalization of trace properties, which are sets of traces, to sets of sets of traces~\cite{ClaSch10}. To give an analogous definition of \emph{concurrent} hyperproperties, we generalize traces to \emph{concurrent} traces, which we define as partially ordered multisets (pomsets). 

Let $\Sigma$ be a set of labels. A $\Sigma$-labeled partially ordered set is a triple $(X, <, \ell)$ where $<$ is an irreflexive partial order on a set $X$ and $\ell: X \rightarrow \Sigma$ is a labeling function. Two such sets $(X,<,\ell)$ and $(X', <', \ell')$ are \emph{isomorphic} if there exists a bijective mapping $f: X \rightarrow X'$ such that $f(x) < f(y) \Leftrightarrow x<y$ and $\ell'(f(x)) = \ell(x)$.
A \emph{partially ordered multiset (pomset)} over $\Sigma$ is an isomorphy class of $\Sigma$-labeled partial ordered sets,
denoted as $[(X, <, \ell)]$.
A \emph{totally ordered multiset (tomset)} is a pomset where $<$ is a total order \cite{Pratt84}.

We then refer to tomsets over $\Sigma$ as \emph{traces} and pomsets over $\Sigma$ as \emph{concurrent traces}. A \emph{trace property} is a set of traces; a \emph{hyperproperty} is a set of sets of traces. Analogously, a \emph{concurrent trace property} is a set of concurrent traces, and a set of sets of concurrent traces is a \emph{concurrent hyperproperty}. We denote with $\mathbb{T}(\Sigma)$ the set of all concurrent traces over $\Sigma$.

\begin{example} % Example 1
\label{ex:conc-trace-prop}
A simple information flow policy for a concurrent system is to forbid any dependency of a low-security event labeled $l$ (for \emph{low}) 
on a high-security event labeled $h$ (for \emph{high}). 
Let $\Sigma=\{ l, h \}$. The policy can be expressed as the concurrent trace property 
$$T_1 = \{\ [(X,<,\ell)] \in \mathbb{T}(\Sigma)\ \mid\ \forall x, y \in X. x<y \Rightarrow \ell(x) \neq h \vee \ell(y) \neq l \}.$$
\end{example}

\begin{example} % Example 2
\label{ex:simple-conc-hyper-prop}
Consider the hyperproperty that every pair of concurrent traces agrees on the occurrence of the low-security events, independent
on any other event.
Let $\Sigma_{\mathit{low}}$ be the set of low-security events. The requirement can then be formalized as the following concurrent hyperproperty $H_1$: 
\[
\begin{array}{llll}
     H_1 = \{\ T \subseteq \mathbb{T}(\Sigma)\ \mid & \forall\, [(X,<,\ell)], [(X',<',\ell')] \in T.\\[1mm]
       &\mbox{\ \ $\exists$ bijection } f: X_{\mathit{low}} \rightarrow X'_{\mathit{low}}.  \forall x\in X_{\mathit{low}}.\,\ell'(f(x)) = \ell(x)\, \} 
\end{array}
\]
where
$X_{\mathit{low}} = \{ x \in X \mid \ell(x) \in \Sigma_{\mathit{low}}\}$
and
$X'_{\mathit{low}} = \{ x \in X' \mid \ell'(x) \in \Sigma_{\mathit{low}}\}$.

In the introduction, we discussed the concurrent hyperproperty that every pair of concurrent traces agrees both on the occurrence and the ordering of the low-security events. This requirement can be formalized as the following concurrent hyperproperty $H_2$: 
\[
\begin{array}{llll}
     H_2 = \{\ T \subseteq \mathbb{T}(\Sigma)\ \mid & \forall\, [(X,<,\ell)], [(X',<',\ell')] \in T.\\[1mm]
       &\mbox{\quad $\exists$ bijection } f: X_{\mathit{low}} \rightarrow X'_{\mathit{low}}. \\[1mm]
       &\mbox{\qquad }   (\ \ \forall x\in X_{\mathit{low}}.\,\ell'(f(x)) = \ell(x) \\[1mm]
       &\mbox{\qquad } \land \forall x,y \in X_{\mathit{low}}.\,f(x) <' f(y) \Leftrightarrow x<y\,)\, \} 
\end{array}
\]

\end{example}

\begin{example}  % Example 3
    \label{ex:conc-hyper-prop-gni}
    As a final example, we adapt the notion of generalized noninterference (GNI)~\cite{McCullough:1987:GNI} to concurrent traces. 
    We identify the events as low-security and high-security: $\Sigma = \Sigma_{\mathit{low}} \cup \Sigma_{\mathit{high}}$. The policy then requires that for every pair of concurrent traces there exists a third concurrent trace that agrees with the first trace on the low-security events and with the second trace on the high-security events. Unlike the trace-based version discussed in the introduction, this version of GNI distinguishes nondeterminism from concurrency; in the example system $\mathcal N_C$ shown on the right in Fig.~\ref{fig:example-systems}, GNI on traces is satisfied, but GNI on concurrent traces is violated.
GNI on concurrent traces is expressed by the following concurrent hyperproperty $H_3$:
  \[
\begin{array}{rlll}
   H_3 = \{\ T \subseteq \mathbb{T}(\Sigma)\ \mid & 
           \forall\, [(X,<,\ell)], [(X',<',\ell')] \in T. \\ & \qquad  \exists [(X'',<'',\ell'')] \in T.\
                      F_{\mathit{low}} \wedge G_{\mathit{high}} \} \\[1mm]
\end{array}
\]
where 
\[
\begin{array}{rlll}
F_{\mathit{low}}\ \equiv
            &\ \exists\, \mbox{bijection } f: X_{\mathit{low}} \rightarrow X''_{\mathit{low}}. \\[1mm]
            &\mbox{\qquad }   (\ \ \forall x\in X_{\mathit{low}}.\,\ell''(f(x)) = \ell(x) \\[1mm]
            &\mbox{\qquad } \land \forall x,y \in X_{\mathit{low}}.\, f(x) <'' f(y) \Leftrightarrow x<y\,), \\[2mm]
G_{\mathit{high}}\ \equiv        
            &\  \exists\,  \mbox{bijection }  g: X'_{\mathit{high}} \rightarrow X''_{\mathit{high}}. \\[1mm]
            &\mbox{\qquad }   (\ \ \forall x\in X'_{\mathit{high}}.\,\ell''(g(x)) = \ell'(x) \\[1mm]
            &\mbox{\qquad } \land \forall x,y \in X'_{\mathit{high}}.\, g(x) <'' g(y) \Leftrightarrow x<'y\,), \\[2mm]
X_{\mathit{low}} = 
            & \{ x \in X \mid \ell(x) \in \Sigma_{\mathit{low}}\}, \\[2mm] 
X''_{\mathit{low}} = 
            & \{ x \in X'' \mid \ell''(x) \in \Sigma_{\mathit{low}}\}, \\[2mm]
X'_{\mathit{high}} = 
            & \{ x \in X' \mid \ell'(x) \in \Sigma_{\mathit{high}} \}, \\[2mm]
X''_{\mathit{high}} = 
            & \{ x \in X'' \mid \ell''(x) \in \Sigma_{\mathit{high}} \}.
\end{array}
\]

\end{example}

\section{Petri Nets}
\label{sec:nets}

As a model for concurrent systems we take Petri nets because they distinguish
the fundamental concepts of causal dependency, nondeterministic choice, and concurrency
explicitly.
We consider here safe Petri nets  \cite{Rei85}, with the transitions labeled by actions
which serve as synchronization points in a parallel composition
of such nets. We use the notation from \cite{Old91}, which is inspired by
\cite{Gol88b}.
{\it A Petri net} or simply {\it net} is a structure
  $\calN{} = ( \rmA{}, \rmPl{}, \arrow, \rmM{0})$,
where
  \begin{enumerate}
    \item \rmA{} is a finite communication alphabet with $\tau \not\in A$,
    \item \rmPl{} is a possibly infinite set of {\it places},
    \item \arrow\ $\subseteq$ $\calP{$\mathit{nf}$} (\rmPl{})$ $\times$ (\rmA{} $\cup$ \{ $\tau$ \}) $\times$ $\calP{$\mathit{nf}$} (\rmPl{})$ 
          is the {\it transition relation},
    \item \rmM{0} $\in$ $\calP{$\mathit{nf}$} (\rmPl{})$ is the {\it initial marking}.  
  \end{enumerate}

\noindent
We let \rmp{}, \rmq{}, \rmr{} range over \rmPl{}. The notation
$\calP{$\mathit{nf}$} (\rmPl{})$ stands for the set of all non-empty, finite subsets of \rmPl{}. An
element (\rmI{}, \rmu{}, \rmO{}) $\in$ \arrow\ with $\rmI{}, \rmO{} \in \calP{$nf$}(\rmPl{})$
and $\rmu{} \in \rmA{} \cup \{\tau\}$
is called a {\it transition} ({\it labeled with the action \rmu{}}) and written as
\[
 \rmI{}\ \arrowu{\rmu{}}\ \rmO{}.\sindex{\rmI{}\ \arrowu{\rmu{}}\ \rmO{}}
\]
For a transition \rmt{} = \rmI{} \arrowu{\rmu{}} \rmO{} its {\it preset} or {\it input} is given by \pre{}(\rmt{}) = \rmI{},
its {\it postset} or {\it output} by \post{}(\rmt{}) = \rmO{}, and its action by \act{}(\rmt{}) = \rmu{}.
The letter $\tau$ is intended to model an \emph{internal} action.

In the graphical representation of a net \calN{} = (\rmA{}, \rmPl{}, \arrow, \rmM{0}) 
we mention the alphabet \rmA{} separately 
and display the components \rmPl{}, \arrow\ and \rmM{0} as usual. 
Places \rmp{} $\in$ \rmPl{} are
represented as circles $\bigcirc$ with the name \rmp{} outside and transitions
\[
 \rmt{} = \{\rmp{1}, \ldots, \rmp{$m$}\}\ \arrowu{\rmu{}}\ \{\rmq{1}, \ldots, \rmq{$n$}\}
\]
as boxes \fbox{\rmu{}}\ carrying the label \rmu{} inside and connected via directed arcs to the
places in \pre{(\rmt{})} and \post{(\rmt{})}:

\vspace{1mm}

\begin{center}
\scalebox{0.9}{
  \begin{tikzpicture}

% Place 13
\node[place,
    label=left:$p_1$] (p13) at (1,3) {};
    
\node[align=center] (d3) at (2.1,3) { $\cdots$ };
    
% Place 33
\node[place,
    label=right:$p_m$] (p33) at (3,3) {};
    
% Transition 
\node[transition,
    label=] (t) at (2,2) {$u$};
 
% Place 11
\node[place,
    label=left:$q_1$] (p11) at (1,1) {};
    
\node[align=center] (d1) at (2.1,1) { $\cdots$ };
    
% Place 31
\node[place,
    label=right:$q_n$] (p31) at (3,1) {};
 
% Connect P-T
\draw[-latex,thick] (p13) -- (t);
\draw[-latex,thick] (p33) -- (t);
 
% Connect T-P
\draw[-latex,thick] (t) -- (p11);
\draw[-latex,thick] (t) -- (p31);

\end{tikzpicture}
}
\end{center}

\noindent
Since \pre{(\rmt{})} and \post{(\rmt{})} need not be disjoint, some of the outgoing arcs of \fbox{\rmu{}}\ 
may actually point back to places in \pre{(\rmt{})} and thus introduce {\it cycles}. 
Graphically, we employ then double-headed arrows between \fbox{\rmu{}} and the places in $\pre{(\rmt{})} \cap \post{(\rmt{})}$.
The initial marking \rmM{0} is represented by putting a token $\bullet$ into the circle of each \rmp{} $\in$ \rmM{0}.

Starting from the initial marking, the firing of transitions creates new markings $M \in \calP{$\mathit{nf}$} (\rmPl{})$,
which represent the global states of a Petri net.
Formally, a transition $t$ is \emph{enabled} at a marking $M$ if $\mathit{pre}(t) \subseteq M$.
\emph{Firing} such a transition $t$ at $M$ 
yields the successor marking 
$M' = (M - \mathit{pre}(t)) \cup \mathit{post}(t)$. We write then $M \firable{t} M'$.
We assume here that $\cup$ is a disjoint union, which is satisfied 
if the net is \emph{contact-free}, i.e., if for all $t \in \tr$
and all reachable markings~$M$
\[
  \mathit{pre}(t) \subseteq M \Rightarrow \mathit{post}(t) \subseteq (\rmPl{} - M) \cup \mathit{pre}(t).
\] 
The set of \emph{reachable markings} of a net $\mathcal N$
is defined by
\[
 \reach(\mathcal{N}) = \{M \mid \exists n \in \mathbb{N}.\, \exists\ t_1, \ldots, t_n \in\, \transitions.\ 
M_0\firable{t_1}M_1 \firable{t_2} \ldots \firable{t_n} M_n = M \}. 
\]
For $n=0$ inside this set, it is understood that $M_0 = M$ holds, so $M_0 \in \reach(\mathcal{N})$.
In the present setting, all reachable markings are non-empty, finite sets of places.
Such Petri nets are called \emph{safe} or \emph{1-bounded} because
every reachable marking contains at most one token per place.
In general place/transition nets, the reachable markings can be multisets representing
multiple tokens per place.

\subsection{Causal Nets and Runs}

Concurrent computations of a net can be described by {\it causal nets} \cite{Pet77,Rei85}. Informally, a causal net
is an acyclic net where all choices have been resolved. It can be seen as a
net-theoretic way of defining a partial order among the occurrences of
transitions in a net to represent their causal dependency. 

We need more notation for a net \calN{}\ = (\rmA{}, \rmPl{}, \arrow, \rmM{0}). 
For a place \rmp{} $\in$ \rmPl{} its \emph{preset} is defined by
$\pre{(\rmp{})}\sindex{\pre{(\rmp{})}} = \{ \rmt{} \in \arrow\ |\ \rmp{} \in \post{(\rmt{})}\ \}$
and its \emph{postset} by
$\post{(\rmp{})}\sindex{\post{(\rmp{})}} = \{ \rmt{} \in \arrow\ |\ \rmp{} \in \pre{(\rmt{})}\ \}$.
The {\it flow relation} \calFN{}\ $\subseteq$ \rmPl{} $\times$ \rmPl{} on the places of \calN{}\ is given by
\[
  \rmp{}\ \calFN{}\ \rmq{}\sindex{\rmp{}\ \calFN{}\ \rmq{}}\ \mtext{if}\ 
  \exists\, \rmt{} \in \arrow{}\,.\ \rmp{} \in \pre{(\rmt{})} \mtext{and} \rmq{} \in \post{(\rmt{})}.
\]
\calFN{}\ is {\it well-founded} if there are no infinite backward chains
\[
 \cdots\ \rmp{3}\ \calFN{}\ \rmp{2}\ \calFN{}\ \rmp{1}.
\]

A {\it causal net} is a net \calN\ = ( \rmA{}, \rmPl{}, \arrow{}, \rmM{0}) such that
\begin{enumerate}
\item[(1)] all places are unbranched, i.e.,
           $\ \forall \rmp{}\ \in \rmPl{}\,.\ |\pre{(\rmp{})}| \leq 1 \mtext{and} |\post{(\rmp{})}| \leq 1,$
\item[(2)] the flow relation \calFN{}\ is well-founded, and
\item[(3)] the initial marking consists of all places without an ingoing arc, i.e.,
           \[
             \rmM{0}\ = \{\rmp{}\ \in \rmPl{}\ |\ \pre{(\rmp{})}\ = \emptyset \}.
           \]
\end{enumerate}

\noindent
By condition (1), there are no choices in \calN{}.
Condition (2) implies that the transitive closure of \calFN{} is irreflexive.
Thus a causal net \calN\ is acyclic, so each transition occurs only once.
Conditions (1)--(3) ensure that there are no
superfluous places and transitions in causal nets: every transition can fire
and every place is contained in some reachable marking.
Also, every causal net is safe.

Following Petri's intuition, causal nets should describe the concurrent computations
of a net. Thus we explain how causal nets relate to ordinary (safe)
nets. To this end, we use the following notion of embedding.

Let \calN{1}\ = (\rmA{1}, \rmPl{1}, \arrowd{1}, \rmM{01}) be a causal net and
  \calN{2}\ = (\rmA{2}, \rmPl{2}, \arrowd{2}, \rmM{02}) be a safe net,
where \rmM{01} and \rmM{02} denote the initial markings of \calN{1} and \calN{2}, respectively.
\calN{1}\ is a {\it causal net of} \calN{2} if \rmA{1}\ = \rmA{2}
and there exists a mapping $\mathf{}\ : \rmPl{1}\ \arrow\ \rmPl{2}$, 
which is extended elementwise to subsets $X \subseteq \rmPl{1}$ by putting
$f(X) = \{ f(p) \in \rmPl{2} \mid p \in X \}$, 
such that the following holds:
\begin{enumerate}
    \item \mathf{}(\rmM{01}) = \rmM{02},
    \item $\forall$ \rmM{} $\in$ \reach(\calN{1}).\ \mathf{}\, $\downarrow$ \rmM{}, the restriction of $f$ to $M \subseteq \rmPl{1}$, is injective,
    \item $\forall$ \rmt{} $\in$ \arrowd{1} .\ $ (\mathf{}(\pre{(\rmt{})}), \act{(\rmt{})}, \mathf{}(\post{(\rmt{})})) \in$ \arrowd{2} ,
\end{enumerate}
The mapping \mathf{}\ is called an {\it embedding of \calN{1}\ into \calN{2}}.
Note that $f$ distributes over the flow relation:
\[
 \forall\, p, q \in \rmPl{1}\, .\,( p\ \calF{$\calN{1}$}\ q\, \Rightarrow f(p)\ \calF{$\calN{2}$}\ f(q).
\]

In net theory, the pair $(\calN{1}, \mathf{})$ is called a {\it process} of $\calN{2}$ \cite{Pet77,BF88}.
We call it a (\emph{concurrent}) \emph{run} of $\calN{2}$ and  use the (possibly decorated) letter % $\pi$ and 
$\rho$ for runs. 
A run $\rho = (\calN{1}, \mathf{})$ of $\calN{2}$ is called \emph{maximal} if
\[
  \forall\, p \in \rmPl{1}\, .\,
  (\exists\,  q \in \rmPl{2}\, .\, f(p)\ \calF{$\calN{2}$}\ q  \Rightarrow
   \exists\, p' \in \rmPl{1}\, .\, p\ \calF{$\calN{1}$}\ p'),                
\]
so the run $\rho$ cannot stop at a place $p$ if there is an extension possible
at the corresponding place $f(p)$ in $\calN{2}$.

\subsection{Causal Nets Corresponding to Concurrent Traces}

A causal net \calN{}\ \emph{corresponds to} the concurrent trace (pomset) $[(X,<,\ell)]$, where
\begin{itemize}
\item $X = \arrow$, the set of transitions of \calN{} ,
\item $<$ is the transitive closure of the \emph{immediate causal successor} relation $<_m$
      between transitions: $\rmt{1} <_m \rmt{2}$ holds for $\rmt{1}, \rmt{2} \in \arrow\ $ if 
      $\ \post{(\rmt{1})} \cap \pre{(\rmt{2})} \neq \emptyset$,
\item $\ell(t) = \act{}(\rmt{})$ for every $t \in \arrow$.
\end{itemize}
The irreflexive partial order $\rmt{1} < \rmt{2}$\ expresses that transition \rmt{2}\ can
occur only after transition \rmt{1} has happened, so \rmt{2}\ {\it causally depends} on \rmt{1}. 
If for transitions \rmt{1}\ $\neq$ \rmt{2}\ neither $\rmt{1} < \rmt{2}$\ nor $\rmt{2} < \rmt{1}$ holds,
\rmt{1}\ and \rmt{2}\ are {\it causally independent} and can occur {\it concurrently}. 
Graphically, we represent these pomsets by showing each transition $t$ labeled with $\ell(t) = u$ 
as a box \fbox{\rmu{}} and connecting these boxes with arcs representing the 
immediate causal successor relation $<_m$ (see Fig.~\ref{fig:example-test}).

Also, vice versa, if a concurrent trace $[(X,<,\ell)]$ is given, it is easy to construct a causal
net \calN{}\ corresponding to the trace in the above sense. One just has to add the missing places
to turn the trace into a causal net.

\subsection{Parallel Composition}

Petri nets with disjoint sets of places, but possibly overlapping communication alphabets can be composed in parallel. 
Thereby transitions with different actions are performed asynchronously, whereas transitions 
with the \emph{same} action synchronize.
For \calN{$i$}  = (\rmA{$i$}, \rmPl{$i$}, \arrowd{$i$}, \rmM{0$i$}), $i$ = 1,2, with \rmPl{1}~$\cap$~\rmPl{2}~=~$\emptyset$
their \emph{parallel composition} is defined as follows:
\[ 
  \calN{1}\, \|\, \calN{2}  = (\rmA{1} \cup \rmA{2}, \rmPl{1} \cup \rmPl{2}, \arrow, \rmM{01} \cup \rmM{02}) ,
\]
where
\[
 \begin{array}{rccclc}
          \arrow = &      & \{ & (\rmI{},\rmu{},\rmO{}) \in \arrowd{1} \cup \arrowd{2}\ |\ \rmu{} \notin \rmA{1} \cap \rmA{2}           & \}
                              & \text{ (asynchrony)}\\[2mm]
                   & \cup\ & \{ & (\rmI{1} \cup \rmI{2}, \rma{}, \rmO{1} \cup \rmO{2})\ |\ \rma{} \in \rmA{1} \cap \rmA{2} \mtext{and}   & 
                              & \text{ (synchrony)}\\
                   &      &    & (\rmI{1},\rma{},\rmO{1}) \in \arrowd{1} \mtext{and} (\rmI{2},\rma{},\rmO{2}) \in \arrowd{2}            & \}.
 \end{array}
\]
Note that actions labeled with the internal action $\tau$ never synchronize because $\tau$ does not appear in any communication alphabet $A_i$.

Up to  bijective renaming of places, the parallel composition of nets
is commutative and associative, i.e., for all nets $\mathcal{N}_1, \mathcal{N}_2, \mathcal{N}_3$:
\begin{eqnarray*}
  \mathcal{N}_1 \ ||\ \mathcal{N}_2 &\ = \ & \mathcal{N}_2 \ ||\ \mathcal{N}_1, \\
  \mathcal{N}_1 \ ||\ (\mathcal{N}_2 \ ||\ \mathcal{N}_3) &\ = \ & (\mathcal{N}_1 \ ||\ \mathcal{N}_2) \ ||\ \mathcal{N}_3.
\end{eqnarray*}

\section{Testing}
\label{sec:testing}

The idea of \emph{testing} processes is due to De Nicola and Hennessy \cite{DH84,Hen88}.
There the interaction of a (nondeterministic) process and a
user is explicitly formalized using a synchronous parallel
composition.  The user is formalized by a \emph{test}, which is a
process with some states marked as a \emph{success}. 
The authors distinguish between two
options: a process may or must pass a test.  A process $P$ \emph{may}
pass a test $T$ if in \emph{some} maximal parallel computation with
$P$, synchronizing on transitions with the same label, the test $T$
reaches a \emph{success} state.  A process $P$ \emph{must} pass a test
$T$ if in \emph{all} such computations the test $T$ reaches a
\emph{success} state.

We transfer this notion of testing to Petri nets.
A \emph{test} is a Petri net, extended by a distinguished set $\tick \subseteq \rmPl{}$
of \emph{successful} places: $\calT{} = ( \rmA{}, \rmPl{}, \tick, \arrow, \rmM{0})$.
In the graphical notation, we mark each place of this subset by the symbol $\tick$.

To perform a test $\calT{}$ on a given Petri net $\calN{}$, we consider the
parallel composition $\calN{}\, \|\, \calT{}$.
A run $\rho = (\calN{$R$}, f)$ of $\calN{}\, \|\, \calT{}$ is \emph{deadlock free} if it is infinite,
and it \emph{terminates successfully} if it is finite and all places of \calT{} inside the parallel composition
without causal successor are marked with $\tick$.
A net $\calN{}$ \emph{may pass} a test  $\calT{}$ if 
there exists a maximal run of $\calN{}\, \|\, \calT{}$ which is deadlock free or terminates successfully.
A net $\calN{}$ \emph{must pass} a test  $\calT{}$ if 
all maximal runs of $\calN{}\, \|\, \calT{}$ are deadlock free or terminate successfully.

To \emph{check a hyperproperty} relating $k$ concurrent traces on a system represented
by a net $\calN{0}$, we investigate maximal runs $\rho_i = (\calN{$i$}, f_i)$ with 
$i = 1, \cdots, k$ of $\calN{0}$, where the causal nets \calN{$i$} correspond to
the concurrent traces of the hyperproperty, except that in \calN{$i$}
we relabel every action $u$ of \calN{$0$} into $u_i$.
We will test the parallel composition
$\calN{$1$}\, \|\, \cdots \, \|\, \calN{$k$}$.
The purpose of this relabeling is to have nets 
$\calN{$1$}, \dots, \calN{$k$}$ that do not synchronize in this composition.
To represent the hyperproperty, we suitably quantify 
existentially or universally over these $k$ runs of $\calN{0}$
and thus arrive at the following possibilities of testing:
\[
 \mathcal{Q_1}\, \rho_1, \cdots, \mathcal{Q_k}\, \rho_k.\ \calN{$1$}\, \|\, \cdots \, \|\, \calN{$k$}\ \mathcal{m} \mbox{ pass } \calT{},
\]
where $\mathcal{Q_i} \in \{\exists, \forall\}$ and $\mathcal{m} \in \{\mbox{may, must}\}$.
\calT{} uses the subscripted labels of the form $u_1, \dots , u_k$ to synchronize
with the actions in $\calN{$1$}, \dots, \calN{$k$}$.

We also use primed copies like $u'$ and $u''$ instead of subscripts.
For example, for $k=2$, we use one causal net \calN{} having the original actions of $\calN{0}$
and one causal $\mathcal{N}'$ with every action $u$ of \calN{$0$} relabled into a primed copy $u'$.
Then the above pattern specializes to
\[
 \mathcal{Q}\,  \rho .\, \mathcal{Q'}\, \rho'.\ \calN{}\, \|\, \mathcal{N}'\ \mathcal{m} \mbox{ pass } \calT{},
\]
where $\mathcal{Q}, \mathcal{Q'} \in \{\exists, \forall\}$ and $\mathcal{m} \in \{\mbox{may, must}\}$.
Whereas $\calN{}$ and $\mathcal{N}'$ have no common actions to synchronize on,
the test \calT{} will synchronize with $\calN{}$ and $\mathcal{N}'$ via common 
(unprimed and primed) actions, thereby checking the hyperproperty.
Note that the explicit quantifiers refer to runs of the system \calN{0} under test.
Once these runs are fixed, may and must corresponds to existential and universal quantification
over runs originating from the test.

\section{Examples}
\label{sec:examples}

%%%%%%%%%%%%%%%%%%%%%%%%%%%%%%%%%%%%%%%%%%%%%%%%%%%%%%%%%%%%%%%%%%%%%%%%%%
% figure of Section Examples put here for layout reasons
%%%%%%%%%%%%%%%%%%%%%%%%%%%%%%%%%%%%%%%%%%%%%%%%%%%%%%%%%%%%%%%%%%%%%%%%%%

\begin{figure}[t]

\begin{minipage}{4.9cm}

 $\calN{$1$}$:

 \scalebox{0.75}{
  \begin{tikzpicture}

% Place 00
\node[place,
    tokens=1,
    label=$p_1$] (p00) at (-1.5,2.5) {};
 
% Place 01
\node[place,
    label=below:$p_2$] (p01) at (-1.5,0.5) {};

% Place 0
\node[place,
    tokens=1,
    label=$q_0$] (p0) at (2,3) {};
 
% Place 1
\node[place,
    label=left:$q_1$] (p1) at (0,1.5) {};

% Place 2
\node[place,
    label=right:$q_2$] (p2) at (4,1.5) {};
    
% Place 3
\node[place,
    label=$q_3$] (p3) at (2,0) {};
    
% Transition 01
\node[transition,
    label=] (t01) at (-1.5,1.5) {$l$};

% Transition 1
\node[transition,
    label=left:] (t1) at (0,2.5) {$h_1$};

% Transition 2
\node[transition,
    label=right:] (t2) at (4,2.5) {$h_2$};

% Transition 3
\node[transition,
    label=] (t3) at (0,0.5) {$l_1$};

% Transition 4
\node[transition,
    label=] (t4) at (4,0.5) {$l_1$};

% Connect P-T
\draw[-latex,thick] (p00) -- (t01);

\draw[-latex,thick] (p0) -- (t1);
\draw[-latex,thick] (p0) -- (t2);
\draw[-latex,thick] (p1) -- (t3);
\draw[-latex,thick] (p2) -- (t4);
 
% Connect T-P
\draw[-latex,thick] (t01) -- (p01);

\draw[-latex,thick] (t1) -- (p1);
\draw[-latex,thick] (t2) -- (p2);
\draw[-latex,thick] (t3) -- (p3);
\draw[-latex,thick] (t4) -- (p3);
 
\end{tikzpicture}
}	
\end{minipage}
\begin{minipage}{1.5cm}
\hspace{1.5cm}
\end{minipage}
\begin{minipage}{4.9cm}

$\calN{$2$}$:

\scalebox{0.75}{
  \begin{tikzpicture}

% Place 00
\node[place,
    tokens=1,
    label=$p_1$] (p00) at (-1.5,2.5) {};
 
% Place 01
\node[place,
    label=below:$p_2$] (p01) at (-1.5,0.5) {}; % {$p_2$};

% Place 0
\node[place,
    tokens=1,
    label=$q_0$] (p0) at (2,3) {};
 
% Place 1
\node[place,
    label=left:$q_1$] (p1) at (0,1.5) {};

% Place 2
\node[place,
    label=right:$q_2$] (p2) at (4,1.5) {};
    
% Place 3
\node[place,
    label=$q_3$] (p3) at (2,0) {};
    
% Transition 01
\node[transition,
    label=] (t01) at (-1.5,1.5) {$l$};

% Transition 1
\node[transition,
    label=left:] (t1) at (0,2.5) {$h_1$};

% Transition 2
\node[transition,
    label=right:] (t2) at (4,2.5) {$h_2$};

% Transition 3
\node[transition,
    label=] (t3) at (0,0.5) {$l_1$};

% Transition 4
\node[transition,
    label=] (t4) at (4,0.5) {$l_2$};

% Connect P-T
\draw[-latex,thick] (p00) -- (t01);

\draw[-latex,thick] (p0) -- (t1);
\draw[-latex,thick] (p0) -- (t2);
\draw[-latex,thick] (p1) -- (t3);
\draw[-latex,thick] (p2) -- (t4);
 
% Connect T-P
\draw[-latex,thick] (t01) -- (p01);

\draw[-latex,thick] (t1) -- (p1);
\draw[-latex,thick] (t2) -- (p2);
\draw[-latex,thick] (t3) -- (p3);
\draw[-latex,thick] (t4) -- (p3);
 
\end{tikzpicture}
}

\end{minipage}

\caption{\emph{Left}: Petri net $\calN{$1$}$ consists of two concurrent subnets, one performs only the low-security action $l$
and the other has a choice starting with different high-security actions $h_1$ and $h_2$, 
but then performing the \emph{same} low-security action $l_1$, no matter whether $h_1$ or $h_2$ was chosen.
\emph{Right}: Petri net $\calN{$2$}$ looks identical to  $\calN{$1$}$, but there is a subtle difference:
the subnet on the right-hand side performs either $l_1$ or $l_2$ depending
on the previous choice of $h_1$ or $h_2$, respectively.}
\label{fig:example-1-2}

\end{figure}
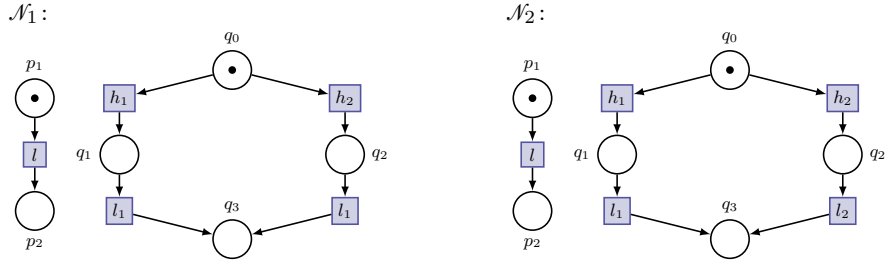

%%%%%%%%%%%%%%%%%%%%%%%%%%%%%%%%%%%%%%%%%%%%%%%%%%%%%%%%%%%%%%%%%%%%%%%%%%

We examine concurrent trace properties 
and concurrent hyperproperties for examples of concurrent systems.
First consider the two Petri nets shown in Fig.~\ref{fig:example-1-2}.
The net $\calN{$1$}$ consists of two concurrent subnets, one performs the low-security action $l$
and the other has a choice starting with different high-security actions $h_1$ and $h_2$, 
but then both branches perform the same low-security action $l_1$.
The net $\calN{$2$}$ has the same structure, except that the choice in the
subnet on the right-hand side is now between performing action $l_1$ or action $l_2$ depending
on the previous choice of the high-security actions $h_1$ or $h_2$, respectively.
Note that due to the choices, each of the nets $\calN{$1$}$ and $\calN{$2$}$ have
two maximal runs, one with actions $h_1$ and one with action $h_2$.

Let us check the trace property whether the low-security action $l_1$ can occur after $l$,
independent of the high-security actions $h_1$ and $h_2$,
To this end, we use the following test~$\calT{}$: 

\begin{center}

\begin{minipage}{1cm}
$\calT{}$: 
\end{minipage}
\begin{minipage}{5cm}
\vspace{2mm}
\scalebox{0.75}{
  \begin{tikzpicture}

% Place 0
\node[place,
    tokens=1,
    label=below:$s_0$] (p0) at (0,0) {};
    
% Transition 1
\node[transition,
    label=] (t1) at (1,0) {$l$};
 
% Place 1
\node[place,
    label=below:$s_1$] (p1) at (2,0) {}; 
    
% Transition 2
\node[transition,
    label=] (t2) at (3,0) {$l_1$};

% Place 2
\node[place,
    label=below:$s_2$] (p2) at (4,0) {$\tick$};

% Connect P-T
\draw[-latex,thick] (p0) -- (t1);
\draw[-latex,thick] (p1) -- (t2);

% Connect T-P
\draw[-latex,thick] (t1) -- (p1);
\draw[-latex,thick] (t2) -- (p2);

\end{tikzpicture} .
}
\end{minipage}
 \end{center}

\noindent
This test is applied to each run of  $\calN{$1$}$  and  $\calN{$2$}$, respectively. We have
\[
 \forall \rho.\ \calN{$1,\rho$} \text{ must pass } \calT{},
\]
because \calT{} terminates successfully for each of the two maximal runs,
independent of the choice of $h_1$ or $h_2$.
Here \calN{$1,\rho$} denotes the net of the run $\rho$ of \calN{$1$}.

For $\calN{$2$}$ the test $\calT{}$ is less successful.
Let \calN{${2,h_1}$} and \calN{${2,h_2}$} be the nets for the two maximal runs of \calN{$2$},
depending on whether $h_1$ or $h_2$ is initially chosen. Then the parallel
composition with \calT{} yields the results shown in Fig.~\ref{fig:test-example-2}.
Note that synchronization is enforced on the common actions $l$ and $l_1$, whereas
$h_1$ and $h_2$ can occur asynchronously.
In $\calN{${2,h_1}$}\ \|\ \calT{}$, the test terminates successfully, whereas
$\calN{${2,h_2}$}\ \|\ \calT{}$ ends in a deadlock.
Thus  
\[
 \forall \rho.\ \calN{$2,\rho$} \text{ may pass } \calT{},
\]
but it is not the case that
$\forall \rho.\ \calN{$2,\rho$}$ must pass  $\calT{}$. Here \calN{$2,\rho$} denotes
the net of the run $\rho$ of \calN{$2$}.

%%%%%%%%%%%%%%%%%%%%%%%%%%%%%%%%%%%%%%%%%%%%%%%%%%%%%%%%%%%%%%%%%%%%%%%
%
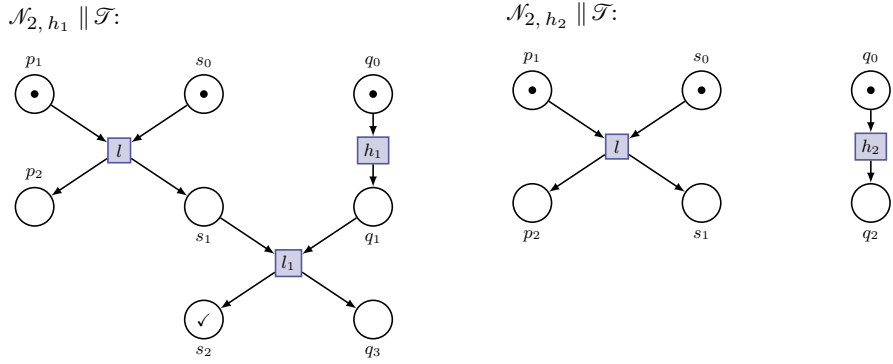
\begin{figure}[h]

\begin{minipage}{4.9cm}

 $\calN{${2,h_1}$}\, \|\, \calT{}$:
 
 \vspace{2mm}

 \scalebox{0.75}{
  \begin{tikzpicture}

% Place 41  % 1st place with y-coordinate 4
\node[place,
    tokens=1,
    label=$p_1$] (p41) at (0,4) {};
 
% Place 42
\node[place,
    tokens=1,
    label=$s_0$] (p42) at (3,4) {};
 
% Place 43
\node[place,
    tokens=1,
    label=above:$q_0$] (p43) at (6,4) {};
    
% Transition 31  % 1st transition with y-coordinate 3
\node[transition,
    label=] (t31) at (1.5,3) {$l$};
 
% Transition 32
\node[transition,
    label=] (t32) at (6,3) {$h_1$};
    
% Place 21
\node[place,
    label=$p_2$] (p21) at (0,2) {};
    
% Place 22
\node[place,
    label=below:$s_1$] (p22) at (3,2) {};
    
% Place 23
\node[place,
    label=below:$q_1$] (p23) at (6,2) {};
    
% Transition 11
\node[transition,
    label=] (t11) at (4.5,1) {$l_1$};

% Place 01
\node[place,
    label=below:$s_2$] (p01) at (3,0) {$\tick$};
    
% Place 02
\node[place,
    label=below:$q_3$] (p02) at (6,0) {};

% Connect P-T
\draw[-latex,thick] (p41) -- (t31);
\draw[-latex,thick] (p42) -- (t31);
\draw[-latex,thick] (p43) -- (t32);

\draw[-latex,thick] (p22) -- (t11);
\draw[-latex,thick] (p23) -- (t11);
 
% Connect T-P
\draw[-latex,thick] (t31) -- (p21);
\draw[-latex,thick] (t31) -- (p22);
\draw[-latex,thick] (t32) -- (p23);

\draw[-latex,thick] (t11) -- (p01);
\draw[-latex,thick] (t11) -- (p02);
 
\end{tikzpicture}
}	
\end{minipage}
\begin{minipage}{1.5cm}
\hspace{1.5cm}
\end{minipage}
\begin{minipage}{4.9cm}

\vspace{-16mm}

$\calN{${2,h_2}$}\, \|\, \calT{}$:

 \vspace{2mm}

\scalebox{0.75}{
  \begin{tikzpicture}

% Place 41
\node[place,
    tokens=1,
    label=$p_1$] (p41) at (0,4) {};
 
% Place 42
\node[place,
    tokens=1,
    label=$s_0$] (p42) at (3,4) {};
 
% Place 43
\node[place,
    tokens=1,
    label=above:$q_0$] (p43) at (6,4) {};
    
% Transition 31
\node[transition,
    label=] (t31) at (1.5,3) {$l$};
 
% Transition 32
\node[transition,
    label=] (t32) at (6,3) {$h_2$};
    
% Place 21
\node[place,
    label=below:$p_2$] (p21) at (0,2) {};
    
% Place 22
\node[place,
    label=below:$s_1$] (p22) at (3,2) {};
    
% Place 23
\node[place,
    label=below:$q_2$] (p23) at (6,2) {};

% Connect P-T
\draw[-latex,thick] (p41) -- (t31);
\draw[-latex,thick] (p42) -- (t31);
\draw[-latex,thick] (p43) -- (t32);

% Connect T-P
\draw[-latex,thick] (t31) -- (p21);
\draw[-latex,thick] (t31) -- (p22);
\draw[-latex,thick] (t32) -- (p23);

\end{tikzpicture}
}

\end{minipage}

\caption{Testing the two maximal runs of \calN{$2$}. In the middle, the places $s_0, s_1, s_2$ of test \calT{} in the parallel composition 
with these two runs are shown.
\emph{Left}: In $\calN{${2,h_1}$}\, \|\, \calT{}$, the test terminates successfully in $s_2$.
\emph{Right}: However, $\calN{${2,h_2}$}\, \|\, \calT{}$ ends in a deadlock, i.e., in places without $\tick$.}

\label{fig:test-example-2}

\end{figure}
%
%%%%%%%%%%%%%%%%%%%%%%%%%%%%%%%%%%%%%%%%%%%%%%%%%%%%%%%%%%%%%%%%%%%%%%%

\subsection{Testing the Concurrent Hyperproperties $H_1$ and $H_2$}

Next we turn to Section~\ref{sec:introduction} and consider the three runs shown in Fig.~\ref{fig:example-test}
stemming from system \calN{$C$} in Fig.~\ref{fig:example-systems}. 
First we check with the sequential test \calT{$\mathit{seq}$} of Fig.~\ref{fig:example-test} 
the concurrent hyperproperty whether every pair of concurrent traces $\pi$ and $\pi'$ 
agrees on the occurrence and ordering of the low-security events $l_1$ and~$l_2$. 
This is property $H_2$ in Example~\ref{ex:simple-conc-hyper-prop}.
Fig.~\ref{fig:example-intro-test-with-T-seq} shows the outcomes of testing $\rho_1$ and $\rho'_3$.
We conclude that  $\rho_1\, \|\, \rho'_3$ may pass \calT{$\mathit{seq}$}.
More general, let \calN{} and $\mathcal{N}'$ be the nets of two runs $\rho$ and $\rho'$
corresponding to two traces $\pi$ and $\pi'$, respectively.
If at least one of $\rho$ and $\rho'$ is instantiated with the concurrent run $\rho_1$, we have
$\calN{}\, \|\, \mathcal{N}' \mbox{ may pass } \calT{$\mathit{seq}$}$, 
otherwise $\calN{}\, \|\, \mathcal{N}' \mbox{ may \emph{not} pass } \calT{$\mathit{seq}$}$.
Summarizing, we have
\[
  \exists\, \rho, \rho'.\, \calN{}\, \|\, \mathcal{N}' \mbox{ may pass } \calT{$\mathit{seq}$}
\]
and even
\[
  \forall\, \rho.\, \exists\, \rho'.\, \calN{}\, \|\, \mathcal{N}' \mbox{ may pass } \calT{$\mathit{seq}$}
\]
because we can instantiate $\rho'$ with $\rho_1$, but \emph{not}
$ \forall\, \rho, \rho'\, .\, \calN{}\, \|\, \mathcal{N}' \mbox{ may pass } \calT{$\mathit{seq}$}$.
However, no \emph{must} property holds for two concurrent traces and the test \calT{$\mathit{seq}$}.
This shows that the system \calN{C} in Fig.~\ref{fig:example-systems} does not satisfy the concurrent hyperproperty $H_2$.

%%%%%%%%%%%%%%%%%%%%%%%%%%%%%%%%%%%%%%%%%%%%%%%%%%%%%%%%%%%%%%%%%%%%%%%
%
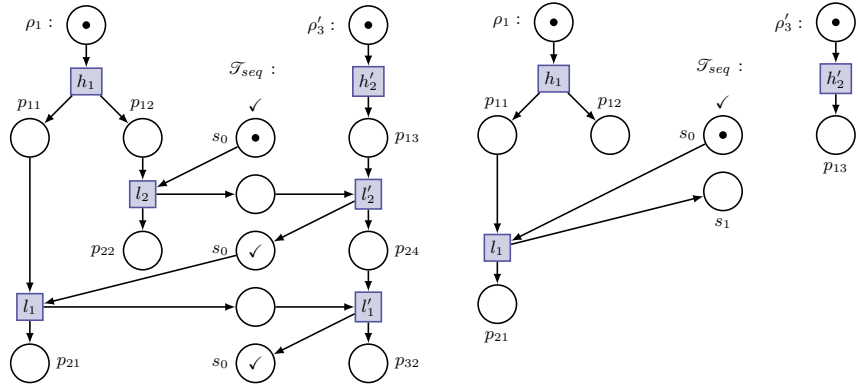
\begin{figure}[t]

\begin{minipage}{4.9cm}

\hspace{4mm}\scalebox{0.75}{
\begin{tikzpicture}

% Place 00 of run 1
\node[place,
    tokens=1,
    label=left: $\rho_1:\ $] (p01) at (1,6) {};
    
% Place 00 of run 3
\node[place,
    tokens=1,
    label=left: $\rho'_3:\ $] (p03) at (6,6) {};
 
% Place 01
\node[place,
    label=$p_{11}$] (p11) at (0,4) {};

% Place 0
\node[place,
    label=$p_{12}$] (p12) at (2,4) {};
    
%%%%%%%%%%%%%%%%%%%%%%%%%%%%%%%%%%%%%%%%%%%%%%%%%%%%

\node[align=center] (d1) at (4,5.2) { $\calT{$\mathit{seq}$}:$ };
 
% 1st Place s0 of N_T %%%%%%%%%%%%%%%%%%%%%%%%%%%%%
\node[place,
    tokens=1, 
    label=above:$\tick$,label=left:$s_0$] (s01) at (4,4) {};
    
% 1st Place s1 of N_T %%%%%%%%%%%%%%%%%%%%%%%%%%%%%
\node[place,
    label=] (s21) at (4,3) {}; % $s_2$
    
% 2nd Place s0 of N_T %%%%%%%%%%%%%%%%%%%%%%%%%%%%%
\node[place,
    label=left:$s_0$] (s02) at (4,2) {$\tick$};
    
% 1st Place s1 of N_T %%%%%%%%%%%%%%%%%%%%%%%%%%%%%
\node[place,
    label=] (s11) at (4,1) {};  % $s_1$
    
% 3rd Place s0 of N_T %%%%%%%%%%%%%%%%%%%%%%%%%%%%%
\node[place,
    label=left:$s_0$] (s03) at (4,0) {$\tick$};
    
%%%%%%%%%%%%%%%%%%%%%%%%%%%%%%%%%%%%%%%%%%%%%%%%%%%
    
% Place 1 of run 3
\node[place,
    label=right:$p_{13}$] (p133) at (6,4) {};
        
% Place 2
\node[place,
    label=right:$p_{21}$] (p21) at (0,0) {};
    
% Place 3
\node[place,
    label=left:$p_{22}$] (p22) at (2,2) {};
    
% Place 3
\node[place,
    label=right:$p_{24}$] (p24) at (6,2) {};
    
% Place 3
\node[place,
    label=right:$p_{32}$] (p32) at (6,0) {};

% Transition 1
\node[transition,
    label=] (t01) at (1,5) {$h_1$};
    
% Transition 2 of run 3
\node[transition,
    label=] (t023) at (6,5) {$h'_2$};
 
% Transition 3
\node[transition,
    label=] (t11) at (0,1) {$l_1$};

% Transition 4
\node[transition,
    label=] (t12) at (2,3) {$l_2$};
    
% Transition 6
\node[transition,
    label=] (t14) at (6,3) {$l'_2$};
 
% Transition 8
\node[transition,
    label=] (t22) at (6,1) {$l'_1$};

% Connect P-T
\draw[-latex,thick] (p01) -- (t01);
\draw[-latex,thick] (p03) -- (t023);

\draw[-latex,thick] (p11) -- (t11);
\draw[-latex,thick] (p12) -- (t12);
\draw[-latex,thick] (p133) -- (t14);

\draw[-latex,thick] (p24) -- (t22);

\draw[-latex,thick] (s01) -- (t12);
\draw[-latex,thick] (s21) -- (t14);
\draw[-latex,thick] (s02) -- (t11);
\draw[-latex,thick] (s11) -- (t22);

% Connect T-P
\draw[-latex,thick] (t01) -- (p11);
\draw[-latex,thick] (t01) -- (p12);

\draw[-latex,thick] (t023) -- (p133);

\draw[-latex,thick] (t11) -- (p21);
\draw[-latex,thick] (t12) -- (p22);
\draw[-latex,thick] (t14) -- (p24);
\draw[-latex,thick] (t22) -- (p32);

\draw[-latex,thick] (t12) -- (s21);
\draw[-latex,thick] (t14) -- (s02);
\draw[-latex,thick] (t11) -- (s11);
\draw[-latex,thick] (t22) -- (s03);
 
\end{tikzpicture}
}
\end{minipage}
\begin{minipage}{1.5cm}
\hspace{1.5cm}
\end{minipage}
\begin{minipage}{2.5cm}

\vspace{-5mm}
\scalebox{0.75}{
\begin{tikzpicture}

% Place 00 of run 1
\node[place,
    tokens=1,
    label=left: $\rho_1:\ $] (p01) at (1,6) {};
    
% Place 00 of run 3
\node[place,
    tokens=1,
    label=left: $\rho'_3:\ $] (p03) at (6,6) {};
 
% Place 01
\node[place,
    label=$p_{11}$] (p11) at (0,4) {};

% Place 0
\node[place,
    label=$p_{12}$] (p12) at (2,4) {};
    
%%%%%%%%%%%%%%%%%%%%%%%%%%%%%%%%%%%%%%%%%%%%%%%%%%%%

\node[align=center] (d1) at (4,5.2) { $\calT{$\mathit{seq}$}:$ };
 
% 1st Place s0 of N_T %%%%%%%%%%%%%%%%%%%%%%%%%%%%%
\node[place,
    tokens=1, 
    label=above:$\tick$,label=left:$s_0$] (s01) at (4,4) {};
    
% 1st Place s1 of N_T %%%%%%%%%%%%%%%%%%%%%%%%%%%%%
\node[place,
    label=below:$s_1$] (s21) at (4,3) {}; % $s_2$
    
%%%%%%%%%%%%%%%%%%%%%%%%%%%%%%%%%%%%%%%%%%%%%%%%%%%%

% Place 1 of run 3
\node[place,
    label=below:$p_{13}$] (p133) at (6,4) {};
        
% Place 2
\node[place,
    label=below:$p_{21}$] (p21) at (0,1) {};

% Transition 1
\node[transition,
    label=] (t01) at (1,5) {$h_1$};
    
% Transition 2 of run 3
\node[transition,
    label=] (t023) at (6,5) {$h'_2$};
 
% Transition 3
\node[transition,
    label=] (t11) at (0,2) {$l_1$};

% Connect P-T
\draw[-latex,thick] (p01) -- (t01);
\draw[-latex,thick] (p03) -- (t023);

\draw[-latex,thick] (p11) -- (t11);

\draw[-latex,thick] (s01) -- (t11);

% Connect T-P
\draw[-latex,thick] (t01) -- (p11);
\draw[-latex,thick] (t01) -- (p12);

\draw[-latex,thick] (t023) -- (p133);

\draw[-latex,thick] (t11) -- (p21);

\draw[-latex,thick] (t11) -- (s21);
 
\end{tikzpicture}
}
\end{minipage}

\caption{Testing a concurrent hyperproperty with  \calT{$\mathit{seq}$}. 
  We consider the two maximal runs of the parallel composition $\rho_1\, \|\, \calT{$\mathit{seq}$} \, \|\, \rho'_3$.
  \emph{Left}: Here at first the alternative starting with $l_2$ of the test \calT{$\mathit{seq}$} is chosen.
  This runs terminates successful.
  \emph{Right}: Here at first the alternative starting with $l_1$ of \calT{$\mathit{seq}$} is chosen.
  This runs ends in a deadlock because $\rho_3$ engages first in $l_2$.
}
\label{fig:example-intro-test-with-T-seq}

\end{figure} % sequential testing of hyperproperty 
%
%%%%%%%%%%%%%%%%%%%%%%%%%%%%%%%%%%%%%%%%%%%%%%%%%%%%%%%%%%%%%%%%%%%%%%%

Now we check with concurrent test \calT{$\mathit{con}$} of Fig.~\ref{fig:example-test} 
the weaker concurrent hyperproperty whether every pair of concurrent traces $\pi$ and $\pi'$ 
agrees on the occurrence of the low-security events $l_1$ and $l_2$, i.e.,
each each $l_1$ must be matched by $l'_1$, but $l_2$ may occur in between,
and vice versa for $l_2$ and $l'_2$ and a possibly intervening $l_1$.
This is property $H_1$ in Example~\ref{ex:simple-conc-hyper-prop}.
Fig.~\ref{fig:example-intro-test-with-T-con} shows the outcomes of testing $\rho_1$ and $\rho_3$.
We conclude that  $\rho_1\, \|\, \rho_3$ must pass \calT{$\mathit{con}$}.
Indeed, we have 
\[
 \forall\, \rho, \rho'\, .\, \calN{}\, \|\, \mathcal{N}' \mbox{ must pass } \calT{$\mathit{con}$}.
\]
This shows that the system \calN{C} in Fig.~\ref{fig:example-systems} satisfies the concurrent hyperproperty $H_1$.

%%%%%%%%%%%%%%%%%%%%%%%%%%%%%%%%%%%%%%%%%%%%%%%%%%%%%%%%%%%%%%%%%%%%%%%
%
\begin{figure}[t]

\begin{center}

\begin{minipage}{4.9cm}

\hspace{4mm}\scalebox{0.75}{
\begin{tikzpicture}

% Place 00 of run 1
\node[place,
    tokens=1,
    label=left: $\rho_1:\ $] (p01) at (1,6) {};
    
% Place 00 of run 3
\node[place,
    tokens=1,
    label=left: $\rho'_3:\ $] (p03) at (6,6) {};
 
% Place 01
\node[place,
    label=$p_{11}$] (p11) at (0,4) {};

% Place 0
\node[place,
    label=$p_{12}$] (p12) at (2,4) {};
    
%%%%%%%%%%%%%%%%%%%%%%%%%%%%%%%%%%%%%%%%%%%%%%%%%%%%

\node[align=center] (d1) at (4,5.2) { $\calT{$\mathit{con}$}:$ };
 
% 1st Place s0 of N_T %%%%%%%%%%%%%%%%%%%%%%%%%%%%%
\node[place,
    tokens=1, 
    label=above:$\tick$,label=left:$s_{02}$] (s01) at (3.5,4) {};
    
% 1st Place s1 of N_T %%%%%%%%%%%%%%%%%%%%%%%%%%%%%
\node[place,
    label=] (s21) at (4,3) {}; % $s_2$
    
% 2nd Place s0 of N_T %%%%%%%%%%%%%%%%%%%%%%%%%%%%%
\node[place,
    label=right:$s_{02}$] (s02) at (4.5,2) {$\tick$};
    
% 2nd Place s0 of N_T %%%%%%%%%%%%%%%%%%%%%%%%%%%%%
\node[place,
    tokens=1, 
    label=above:$\tick$,label=left:$s_{01}$] (s021) at (3.5,2) {};
    
% 1st Place s1 of N_T %%%%%%%%%%%%%%%%%%%%%%%%%%%%%
\node[place,
    label=] (s11) at (4,1) {};  % $s_1$
    
% 3rd Place s0 of N_T %%%%%%%%%%%%%%%%%%%%%%%%%%%%%
\node[place,
    label=right:$s_{01}$] (s03) at (4.5,0) {$\tick$};
    
%%%%%%%%%%%%%%%%%%%%%%%%%%%%%%%%%%%%%%%%%%%%%%%%%%%
    
% Place 1 of run 3
\node[place,
    label=right:$p_{13}$] (p133) at (6,4) {};
        
% Place 2
\node[place,
    label=right:$p_{21}$] (p21) at (0,0) {};
    
% Place 3
\node[place,
    label=left:$p_{22}$] (p22) at (2,2) {};
    
% Place 3
\node[place,
    label=right:$p_{24}$] (p24) at (6,2) {};
    
% Place 3
\node[place,
    label=right:$p_{32}$] (p32) at (6,0) {};

% Transition 1
\node[transition,
    label=] (t01) at (1,5) {$h_1$};
    
% Transition 2 of run 3
\node[transition,
    label=] (t023) at (6,5) {$h'_2$};
 
% Transition 3
\node[transition,
    label=] (t11) at (0,1) {$l_1$};

% Transition 4
\node[transition,
    label=] (t12) at (2,3) {$l_2$};
    
% Transition 6
\node[transition,
    label=] (t14) at (6,3) {$l'_2$};
 
% Transition 8
\node[transition,
    label=] (t22) at (6,1) {$l'_1$};

% Connect P-T
\draw[-latex,thick] (p01) -- (t01);
\draw[-latex,thick] (p03) -- (t023);

\draw[-latex,thick] (p11) -- (t11);
\draw[-latex,thick] (p12) -- (t12);
\draw[-latex,thick] (p133) -- (t14);

\draw[-latex,thick] (p24) -- (t22);

\draw[-latex,thick] (s01) -- (t12);

\draw[-latex,thick] (s21) -- (t14);
% \draw[-latex,thick] (s02) -- (t11);
\draw[-latex,thick] (s021) -- (t11);
\draw[-latex,thick] (s11) -- (t22);

% Connect T-P
\draw[-latex,thick] (t01) -- (p11);
\draw[-latex,thick] (t01) -- (p12);

\draw[-latex,thick] (t023) -- (p133);

\draw[-latex,thick] (t11) -- (p21);
\draw[-latex,thick] (t12) -- (p22);
\draw[-latex,thick] (t14) -- (p24);
\draw[-latex,thick] (t22) -- (p32);

\draw[-latex,thick] (t12) -- (s21);
\draw[-latex,thick] (t14) -- (s02);
\draw[-latex,thick] (t11) -- (s11);
\draw[-latex,thick] (t22) -- (s03);
 
\end{tikzpicture}
}
\end{minipage}

\end{center}

\caption{Testing a concurrent hyperproperty with  \calT{$\mathit{con}$}. 
  We consider the unique maximal run of the parallel composition $\rho_1\, \|\, \calT{$\mathit{con}$} \, \|\, \rho'_3$.
  This run terminates successfully because both concurrent components of the test end
  in a place marked with $\tick$.
}
\label{fig:example-intro-test-with-T-con}

\end{figure}
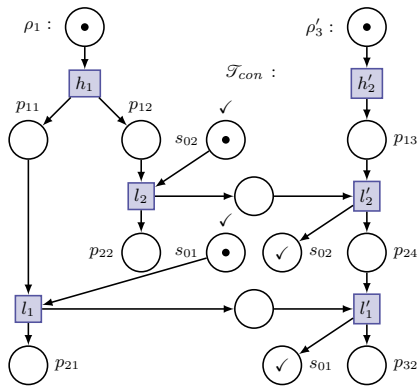 % concurrent testing of hyperproperty 
%
%%%%%%%%%%%%%%%%%%%%%%%%%%%%%%%%%%%%%%%%%%%%%%%%%%%%%%%%%%%%%%%%%%%%%%%

\subsection{Testing the Concurrent Properties $T_1$ and $H_3$}

Consider the concurrent trace property $T_1$ of Example~\ref{ex:conc-trace-prop} for a net \calN,
where a low-security event $l$ must not depend on a high-security event $h$.
We check this by requiring that
\[
 \calN{} \mbox{ must pass } \calT{$\mathit{hl}$}
\]
for the following test \calT{$\mathit{hl}$}:

\begin{center}
\scalebox{0.75}{
  \begin{tikzpicture}

% Transition 1
\node[transition,
    label=] (t31) at (0,3) {$l$};

% Place 00
\node[place,
    tokens=1,
    label=above:$\tick$] (p31) at (1.5,3) {};
    
% Transition 2
\node[transition,
    label=] (t32) at (3,3) {$h$};

% Place 01
\node[place,
    label=] (p32) at (4.5,3) {$\tick$};
    
% Transition 3
\node[transition,
    label=] (t33) at (6,3) {$h$};
    
% Transition tau1
% \node[transition,
%     label=] (t21) at (1.5,1.5) {$\tau$};
    
% Transition tau2
% \node[transition,
%     label=] (t22) at (4.5,1.5) {$\tau$};
%     
 
% Place 0
% \node[place,
%     label=] (p01) at (1.5,0) {$\tick$};
 
% Place 1
% \node[place,
%     label=] (p02) at (4.5,0) {$\tick$};
  
%%%%%%%%%%%%%%%%%%%%%%%%%%%%%%%%%%%%%%%%%%%%%%
 
% Connect P-T
\draw[-latex,thick] (p31) -- (t31);

\draw[-latex,thick] (p31) -- (t32);
\draw[-latex,thick] (p32) -- (t33);

% \draw[-latex,thick] (p31) -- (t21);
% \draw[-latex,thick] (p32) -- (t22);
 
% Connect T-P
\draw[-latex,thick] (t31) -- (p31);

\draw[-latex,thick] (t32) -- (p32);
\draw[-latex,thick] (t33) -- (p32);
% \draw[-latex,thick] (t21) -- (p01);
% \draw[-latex,thick] (t22) -- (p02);
 
\end{tikzpicture}
}
\end{center}

\vspace{2mm}

\noindent
This test can terminate successfully after any (possibly empty) sequence of low-security events $l$.
However, once a high-security event $h$ occurs, the test terminates successfully only after any 
(possibly empty) sequence of further $h$ events. Any low-security event $l$ occurring after the first $h$
will lead to a deadlock since the test does not offer any further synchronization on $l$.

\medskip

Finally, we consider the concurrent hyperproperty $H_3$ of generalized noninterference of Example~\ref{ex:conc-hyper-prop-gni}.
As low-security events we take $l_1, l_2 \in \Sigma_\mathit{low}$
and as high-security events $h_1, h_2 \in \Sigma_\mathit{high}$. The property is checked by requiring that
\[
 \forall\, \rho, \rho'.\ \exists \rho''.\ \calN{}\, \|\, \mathcal{N}' \|\, \mathcal{N}'' \mbox{ must pass } \calT{$\mathit{gni}$}
 \]
for the test \calT{$\mathit{gni}$} shown in Fig.~\ref{fig:test-gni}.

\begin{figure}[ht]

\begin{minipage}{2cm}

%%%%%%%%%%%%%%%%%%%%%%%%%%%%%%%%%%%%%%%%%%%%%%%%%%%%%%%%%%%
 \calT{$\mathit{gni}$}:   % gni test
%%%%%%%%%%%%%%%%%%%%%%%%%%%%%%%%%%%%%%%%%%%%%%%%%%%%%%%%%%%

\end{minipage}
\begin{minipage}{8cm}

\begin{center}
\scalebox{0.75}{
  \begin{tikzpicture}

%%%%%%%%%%%%%%%%%%%%%%%%%%%%%%%%%%%%%%%%%%%%%%%%%
%  choice
%%%%%%%%%%%%%%%%%%%%%%%%%%%%%%%%%%%%%%%%%%%%%%%%%

% Place 0
\node[place,
    tokens=1,
    label=] (p54) at (5,4) {};
    
% Transition 1
\node[transition,
    label=] (t24) at (2,4) {$\tau$};
    
% Transition 2
\node[transition,
    label=] (t84) at (8,4) {$\tau$};

%%%%%%%%%%%%%%%%%%%%%%%%%%%%%%%%%%%%%%%%%%%%%%%%%
%  testing Low 
%%%%%%%%%%%%%%%%%%%%%%%%%%%%%%%%%%%%%%%%%%%%%%%%%

% Place 1
\node[place,
    label=] (p03) at (0,3) {};
    
% Transition 3
\node[transition,
    label=] (t13) at (1,3) {$l_1$};
    
% Place 2
\node[place,
    label=] (p23) at (2,3) {$\tick$};
    
% Transition 4
\node[transition,
    label=] (t33) at (3,3) {$l_2$};
    
% Place 3
\node[place,
    label=] (p43) at (4,3) {};

% Transition 5
\node[transition,
    label=] (t12) at (1,2) {$l''_1$};

% Transition 6
\node[transition,
    label=] (t32) at (3,2) {$l''_2$};
    
% Transition o-skip1
\node[transition,
    label=] (to01) at (0,1) {$h$};

\node[transition,
  label=] (to01) at (0,1) {$h$};

% Transition o-skip2
\node[transition,
    label=] (to21) at (2,1) {$h$};
    
% Transition o-skip3
\node[transition,
    label=] (to41) at (4,1) {$h$};

%%%%%%%%%%%%%%%%%%%%%%%%%%%%%%%%%%%%%%%%%%%%%%%%%
%  testing High
%%%%%%%%%%%%%%%%%%%%%%%%%%%%%%%%%%%%%%%%%%%%%%%%%

% Place 1
\node[place,
    label=] (p63) at (6,3) {};
    
% Transition 3
\node[transition,
    label=] (t73) at (7,3) {$h_1'$};
    
% Place 2
\node[place,
    label=] (p83) at (8,3) {$\tick$};
    
% Transition 4
\node[transition,
    label=] (t93) at (9,3) {$h_2'$};
    
% Place 3
\node[place,
    label=] (p103) at (10,3) {};

% Transition 5
\node[transition,
    label=] (t72) at (7,2) {$h''_1$};

% Transition 6
\node[transition,
    label=] (t92) at (9,2) {$h''_2$};
    
% Transition o-skip1
\node[transition,
    label=] (to61) at (6,1) {$l$};

\node[transition,
  label=] (to61) at (6,1) {$l$};

% Transition o-skip2
\node[transition,
    label=] (to81) at (8,1) {$l$};
    
% Transition o-skip3
\node[transition,
    label=] (to101) at (10,1) {$l$};

%%%%%%%%%%%%%%%%%%%%%%%%%%%%%%%%%%%%%%%%%%%%%%%%%
% Connect P-T
%%%%%%%%%%%%%%%%%%%%%%%%%%%%%%%%%%%%%%%%%%%%%%%%%

%%%%%%%%%% choice:

\draw[-latex,thick] (p54) -- (t24);
\draw[-latex,thick] (p54) -- (t84);

%%%%%%%%%% low:

\draw[-latex,thick] (p23) -- (t13);
\draw[-latex,thick] (p23) -- (t33);

\draw[-latex,thick] (p03) -- (t12);

\draw[-latex,thick] (p43) -- (t32);

\draw[-latex,thick] (p03) -- (to01);
\draw[-latex,thick] (p23) -- (to21);
\draw[-latex,thick] (p43) -- (to41);

%%%%%%%%%% high:

\draw[-latex,thick] (p83) -- (t73);
\draw[-latex,thick] (p83) -- (t93);

\draw[-latex,thick] (p63) -- (t72);

\draw[-latex,thick] (p103) -- (t92);

\draw[-latex,thick] (p63) -- (to61);
\draw[-latex,thick] (p83) -- (to81);
\draw[-latex,thick] (p103) -- (to101);

%%%%%%%%%%%%%%%%%%%%%%%%%%%%%%%%%%%%%%%%%%%%%%%%%
% Connect T-P
%%%%%%%%%%%%%%%%%%%%%%%%%%%%%%%%%%%%%%%%%%%%%%%%%

%%%%%%%%%% choice:

\draw[-latex,thick] (t24) -- (p23);
\draw[-latex,thick] (t84) -- (p83);

%%%%%%%%%% low:

\draw[-latex,thick] (t13) -- (p03);
\draw[-latex,thick] (t33) -- (p43);

\draw[-latex,thick] (t12) -- (p23);
\draw[-latex,thick] (t32) -- (p23);

\draw[-latex,thick] (to01) -- (p03);
\draw[-latex,thick] (to21) -- (p23);
\draw[-latex,thick] (to41) -- (p43);

%%%%%%%%%% high:

\draw[-latex,thick] (t73) -- (p63);
\draw[-latex,thick] (t93) -- (p103);

\draw[-latex,thick] (t72) -- (p83);
\draw[-latex,thick] (t92) -- (p83);

\draw[-latex,thick] (to61) -- (p63);
\draw[-latex,thick] (to81) -- (p83);
\draw[-latex,thick] (to101) -- (p103);
 
\end{tikzpicture}
}
\end{center}

\end{minipage}

\caption{Test \calT{$\mathit{gni}$}}

\label{fig:test-gni}

\end{figure}

\vspace{2mm}

\noindent
In the two universally quantified runs $\rho$ and $\rho'$,
this test uses labels $l_1, l_2, h_1, h_2$ in the net \calN{} of run $\rho$
and copies $l'_1, l'_2, h'_1, h'_2$ in the net $\mathcal{N}'$ of $\rho'$.
Likewise, in the existentially quantified run $\rho''$,
the test uses labels $l''_1, l''_2, h''_1, h''_2$ in the net $\mathcal{N}''$ of $\rho''$.

Note that the test \calT{$\mathit{gni}$} has an initial choice between the two internal $\tau$ actions,
but the conjunction in $H_3$ is modeled by must testing, which requires that for each run $\rho$ and $\rho'$
both branches terminate with a success.
In the left branch, the test is successful if it terminates when the low-security events $l_1, l_2$ are matched by corresponding events $l''_1, l''_2$, so that $F_{\mathit{low}}$ holds.
The three transitions labeled $h$ are shorthands for the occurrence of any
event $h_1, h_2, l_1', l_2', h_1', h_2',h_1'', h_2''$ that may intervene in this branch without any effect.
In the right branch, the test is successful if it terminates when the high-security events $h_1', h_2'$ are matched by corresponding events $h''_1, h''_2$, so that $G_{\mathit{high}}$ holds. The three transitions labeled $l$ are shorthands for the occurrence of any
event $l_1, l_2, h_1, h_2, l_1', l_2', l_1'', l_2''$ that may intervene in this branch without any effect.

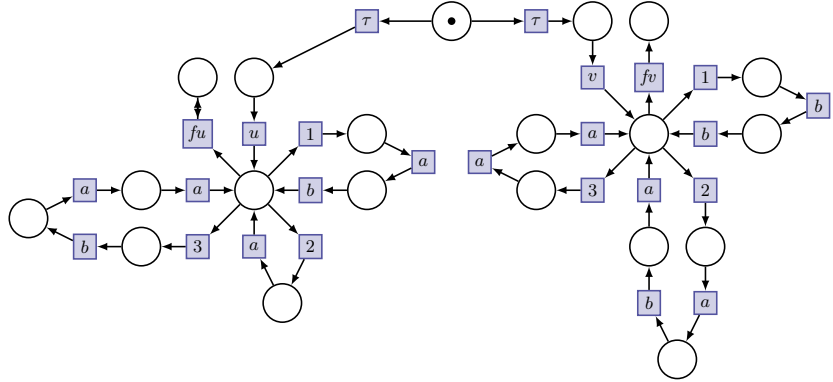
\begin{figure}[ht]
\begin{center}

\scalebox{0.75}{
  \begin{tikzpicture}

%%%%%%%%%%%%%%%%%%%%%%%%%%%%%%%%%%%%%%%%%%%%%%%%%
%  choice
%%%%%%%%%%%%%%%%%%%%%%%%%%%%%%%%%%%%%%%%%%%%%%%%%

\node[place,
    tokens=1,
    label=] (p756) at (7.5,6) {};

% Transition 66
\node[transition,
    label=] (t66) at (6,6) {$\tau$};

% Transition 96
\node[transition,
    label=] (t96) at (9,6) {$\tau$};

%%%%%%%%%%%%%%%%%%%%%%%%%%%%%%%%%%%%%%%%%%%%%%%%%
%  Part u
%%%%%%%%%%%%%%%%%%%%%%%%%%%%%%%%%%%%%%%%%%%%%%%%%

% Place 34
 \node[place,
    label=] (p34) at (3,5) {};
    
% Place 44
\node[place,
     label=] (p44) at (4,5) {};
   
% Transition 33
\node[transition,
    label=] (t33) at (3,4) {$\mathit{fu}$};
    
% Transition 43
\node[transition,
    label=] (t43) at (4,4) {$u$};
    
% Transition 53
\node[transition,
    label=] (t53) at (5,4) {$1$};
    
% Place 63
\node[place,
    label=] (p63) at (6,4) {};
    
% Transition 725
\node[transition,
    label=] (t725) at (7,3.5) {$a$};
    
% Transition 12
\node[transition,
    label=] (t12) at (1,3) {$a$};
    
% Place 22
\node[place,
    label=] (p22) at (2,3) {};

% Transition 32
\node[transition,
    label=] (t32) at (3,3) {$a$};
    
% Place 42
\node[place,
    label=] (p42) at (4,3) {};
    
% Transition 52
\node[transition,
    label=] (t52) at (5,3) {$b$};
    
% Place 62
\node[place,
    label=] (p62) at (6,3) {};
    
% Place 015
\node[place,
    label=] (p015) at (0,2.5) {};
    
% Transition 11
\node[transition,
    label=] (t11) at (1,2) {$b$};
    
% Place 21
\node[place,
    label=] (p21) at (2,2) {};
    
% Transition 31
\node[transition,
    label=] (t31) at (3,2) {$3$};
    
% Transition 41
\node[transition,
    label=] (t41) at (4,2) {$a$};
    
% Transition 51
\node[transition,
    label=] (t51) at (5,2) {$2$};
    
% Place 450
\node[place,
    label=] (p450) at (4.5,1) {};

%%%%%%%%%%%%%%%%%%%%%%%%%%%%%%%%%%%%%%%%%%%%%%%%%
% Initial choice of Part u
%%%%%%%%%%%%%%%%%%%%%%%%%%%%%%%%%%%%%%%%%%%%%%%%%

\draw[-latex,thick] (p756) -- (t66);
    
%%%%%%%%%%%%%%%%%%%%%%%%%%%%%%%%%%%%%%%%%%%%%%%%%
% Connect P-T in Part u
%%%%%%%%%%%%%%%%%%%%%%%%%%%%%%%%%%%%%%%%%%%%%%%%%

\draw[-latex,thick] (p44) -- (t43);

\draw[-latex,thick] (p34) -- (t33);

\draw[-latex,thick] (p63) -- (t725);

\draw[-latex,thick] (p22) -- (t32);
\draw[-latex,thick] (p42) -- (t33);
\draw[-latex,thick] (p42) -- (t53);
\draw[-latex,thick] (p42) -- (t31);
\draw[-latex,thick] (p42) -- (t51);
\draw[-latex,thick] (p62) -- (t52);

\draw[-latex,thick] (p015) -- (t12);

\draw[-latex,thick] (p21) -- (t11);

\draw[-latex,thick] (p450) -- (t41);

%%%%%%%%%%%%%%%%%%%%%%%%%%%%%%%%%%%%%%%%%%%%%%%%%
% Connect T-P in Part u
%%%%%%%%%%%%%%%%%%%%%%%%%%%%%%%%%%%%%%%%%%%%%%%%%

\draw[-latex,thick] (t66) -- (p44);

\draw[-latex,thick] (t33) -- (p34);
\draw[-latex,thick] (t43) -- (p42);
\draw[-latex,thick] (t53) -- (p63);

\draw[-latex,thick] (t725) -- (p62);

\draw[-latex,thick] (t12) -- (p22);
\draw[-latex,thick] (t32) -- (p42);
\draw[-latex,thick] (t52) -- (p42);

\draw[-latex,thick] (t11) -- (p015);
\draw[-latex,thick] (t31) -- (p21);
\draw[-latex,thick] (t41) -- (p42);
\draw[-latex,thick] (t51) -- (p450);

%%%%%%%%%%%%%%%%%%%%%%%%%%%%%%%%%%%%%%%%%%%%%%%%%
%%%%%%%%%%%%%%%%%%%%%%%%%%%%%%%%%%%%%%%%%%%%%%%%%
%  Part v
%%%%%%%%%%%%%%%%%%%%%%%%%%%%%%%%%%%%%%%%%%%%%%%%%
%%%%%%%%%%%%%%%%%%%%%%%%%%%%%%%%%%%%%%%%%%%%%%%%%

% Place 26
\node[place,
     label=] (p26) at (10,6) {};
    
% Place 36
\node[place,
    label=] (p36) at (11,6) {};
    
% Transition 25
\node[transition,
    label=] (t25) at (10,5) {$v$};
    
% Transition 35
\node[transition,
    label=] (t35) at (11,5) {$\mathit{fv}$};
    
% Transition 45
\node[transition,
    label=] (t45) at (12,5) {$1$};
    
% Place 55
\node[place,
    label=] (p55) at (13,5) {};
    
% Transition 645
\node[transition,
    label=] (t645) at (14,4.5) {$b$};
    
% Place 14
\node[place,
    label=] (p14) at (9,4) {};
    
% Transition 24
\node[transition,
    label=] (t24) at (10,4) {$a$};
    
% Place 34
\node[place,
    label=] (p34) at (11,4) {};

% Transition 44
\node[transition,
    label=] (t44) at (12,4) {$b$};
    
% Place 54
\node[place,
    label=] (p54) at (13,4) {};
    
% Transition 035
\node[transition,
    label=] (t035) at (8,3.5) {$a$};
    
% Place 13
\node[place,
    label=] (p13) at (9,3) {};
    
% Transition 23
\node[transition,
    label=] (t23) at (10,3) {$3$};
    
% Transition 33
\node[transition,
    label=] (t33) at (11,3) {$a$};
    
% Transition 43
\node[transition,
    label=] (t43) at (12,3) {$2$};
    
% Place 32
\node[place,
    label=] (p32) at (11,2) {};
    
% Place 42
\node[place,
    label=] (p42) at (12,2) {};
    
% Transition 31
\node[transition,
    label=] (t31) at (11,1) {$b$};
    
% Transition 41
\node[transition,
    label=] (t41) at (12,1) {$a$};
    
% Place 350
\node[place,
    label=] (p350) at (11.5,0) {};

%%%%%%%%%%%%%%%%%%%%%%%%%%%%%%%%%%%%%%%%%%%%%%%%%
% Initial choice of Part v
%%%%%%%%%%%%%%%%%%%%%%%%%%%%%%%%%%%%%%%%%%%%%%%%%

\draw[-latex,thick] (p756) -- (t96);

%%%%%%%%%%%%%%%%%%%%%%%%%%%%%%%%%%%%%%%%%%%%%%%%%
% Connect P-T in Part v
%%%%%%%%%%%%%%%%%%%%%%%%%%%%%%%%%%%%%%%%%%%%%%%%%

\draw[-latex,thick] (p26) -- (t25);

\draw[-latex,thick] (p55) -- (t645);

\draw[-latex,thick] (p14) -- (t24);
\draw[-latex,thick] (p34) -- (t35);
\draw[-latex,thick] (p34) -- (t45);
\draw[-latex,thick] (p34) -- (t23);
\draw[-latex,thick] (p34) -- (t43);
\draw[-latex,thick] (p54) -- (t44);

\draw[-latex,thick] (p13) -- (t035);

\draw[-latex,thick] (p32) -- (t33);

\draw[-latex,thick] (p42) -- (t41);

\draw[-latex,thick] (p350) -- (t31);

%%%%%%%%%%%%%%%%%%%%%%%%%%%%%%%%%%%%%%%%%%%%%%%%%
% Connect T-P in Part v
%%%%%%%%%%%%%%%%%%%%%%%%%%%%%%%%%%%%%%%%%%%%%%%%%

\draw[-latex,thick] (t96) -- (p26);

\draw[-latex,thick] (t25) -- (p34);
\draw[-latex,thick] (t35) -- (p36);
\draw[-latex,thick] (t45) -- (p55);

\draw[-latex,thick] (t645) -- (p54);

\draw[-latex,thick] (t24) -- (p34);
\draw[-latex,thick] (t44) -- (p34);

\draw[-latex,thick] (t035) -- (p14);

\draw[-latex,thick] (t23) -- (p13);

\draw[-latex,thick] (t33) -- (p34);
\draw[-latex,thick] (t43) -- (p42);

\draw[-latex,thick] (t31) -- (p32);
\draw[-latex,thick] (t41) -- (p350);

\end{tikzpicture}
}

\end{center}

\caption{Petri net \calN{$I$} simulating the input $I$ of the PCP}

\label{fig:PCP-u-or-v}

\end{figure}

\section{Decidability}
\label{sec:decidability}

%%%%%%%%%%%%%%%%%%%%%%%%%%%%%%%%%%%%%
%  Decidability
%%%%%%%%%%%%%%%%%%%%%%%%%%%%%%%%%%%%%

Universal must testing of a net \calN{0} of the form
\[
 (*) \qquad \forall\, \rho_1, \cdots, \forall\, \rho_k.\ \calN{$1$}\, \|\, \cdots \, \|\, \calN{$k$}\ \mbox{ must pass } \calT{},
\]
can be decided because their falsification is a reachability problem.
Indeed, the negation of $(*)$ means that there exist $k$ runs of \calN{0} that composed in parallel with \calT{} 
yield a finite net in which there exist places of \calT{} without causal successor that are not marked with $\tick$.
Instead of referring to $k$ runs of \calN{0} we can equivalently refer to $k$ copies 
$\calN{$0,1$}, \dots, \calN{$0,k$}$ of \calN{0},
with suitably renamed action labels, and check the net
$\calN{} = \calN{$0,1$}\, \|\, \cdots \, \|\, \calN{$0,k$}\, \|\ \calT{}$,
with $\arrow$ as its transition relation and $\rmPl{\calT{}}$ as the set of places inside \calT{},
for the following property:
\[
 \exists\, M  \in reach(\calN{}).\ \exists\, p \in M \cap \rmPl{\calT{}}.\ p \not\in \tick
                                   \land \, \neg\, \exists\, t \in\, \arrow.\ t \mbox{ is enabled at } M.
\]
This is a reachability problem for Petri nets, which is decidable~\cite{May84}.
Since we consider safe Petri nets, this reachablity is PSPACE-complete~\cite{EspNie94}.

\medskip

%%%%%%%%%%%%%%%%%%%%%%%%%%%%%%%%%%%%%
%  Undecidability
%%%%%%%%%%%%%%%%%%%%%%%%%%%%%%%%%%%%%

By contrast, universal may testing quickly gets undecidable.

\begin{theorem}
Universal may testing is undecidable for tests with two maximal runs.
\end{theorem}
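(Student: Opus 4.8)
The plan is to reduce the Post Correspondence Problem (PCP), or rather its complement, to universal may testing, using the net $\calN{$I$}$ of Fig.~\ref{fig:PCP-u-or-v}. Recall PCP: given nonempty words $(u_1,v_1),\dots,(u_n,v_n)$ over, say, $\{a,b\}$, decide whether some nonempty index sequence $\bar\imath=i_1\cdots i_m$ satisfies $u_{i_1}\cdots u_{i_m}=v_{i_1}\cdots v_{i_m}$; both PCP and its complement are undecidable. From an instance $I$ I would compute in polynomial time a net $\calN{$I$}$ and a test $\calT$ with two maximal runs, and I would instantiate the testing scheme with $k=2$, so that the goal is
\[
 \forall\,\rho,\rho'.\ \calN{}\,\|\,\mathcal{N}' \text{ may pass }\calT
 \quad\Longleftrightarrow\quad
 I \text{ has no PCP solution},
\]
where $\rho,\rho'$ range over the maximal runs of $\calN{$I$}$, $\calN{}$ is the causal net of $\rho$, and $\mathcal{N}'$ is that of $\rho'$ with every action $u$ replaced by a primed copy $u'$, so that $\calN{}$ and $\mathcal{N}'$ share no action and synchronize only with $\calT$.

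First I would fix $\calN{$I$}$ as in Fig.~\ref{fig:PCP-u-or-v}: an initial internal choice leads to a ``$u$-side'' and a ``$v$-side''. A maximal run of the $u$-side fires a marker $u$, then repeatedly picks an index $i\in\{1,\dots,n\}$, fires a marker $i$, and spells out the letters of $u_i$ by transitions labeled $a$ or $b$; after some iteration it may instead fire $\mathit{fu}$ and stop. Such a \emph{$u$-run} thus presents a nonempty index sequence $\bar\imath$ together with the concatenation $U_{\bar\imath}=u_{i_1}\cdots u_{i_m}$; the $v$-side is symmetric with $v$, $v_i$, $\mathit{fv}$, yielding \emph{$v$-runs} presenting some $\bar\imath$ and $V_{\bar\imath}$. (The net also has non-terminating runs that loop without ever firing $\mathit{fu}$ or $\mathit{fv}$; these will be handled by the same mechanism below and cause no trouble.) A pair $(\rho,\rho')$ \emph{encodes a PCP solution} iff $\rho$ is a $u$-run and $\rho'$ is a $v$-run presenting the \emph{same} index sequence $\bar\imath$ with $U_{\bar\imath}=V_{\bar\imath}$ -- i.e.\ iff $\bar\imath$ witnesses solvability of $I$.

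Next I would build $\calT$ as an initial internal $\tau$-choice between two branches, so that it has two maximal runs, and so that $\calN{}\,\|\,\mathcal{N}'\,\|\,\calT$ reaches a $\tick$-place along \emph{some} maximal run exactly when $(\rho,\rho')$ does \emph{not} encode a solution. Each branch synchronizes with $\rho$ and $\rho'$ to search for a \emph{witness of failure}: a place at which the marker or letter emitted by $\rho'$ deviates from the corresponding one of $\rho$, or at which one word ends while the other has not, or at which the $u/v$ markers clash (on the last, a branch resynchronizes straight to a $\tick$-place); the two branches split the search, essentially into the two directions of ``$\rho$'s word is a prefix of $\rho'$'s'' and vice versa. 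Whenever a witness exists -- in particular whenever $(\rho,\rho')$ is not a solution encoding, including the case where one of $\rho,\rho'$ has the ``wrong'' $u/v$-shape, and also the case of a non-terminating run, which eventually overshoots the other run's index sequence -- some branch runs into its $\tick$-place, and since the runs are finite (or the composition stays infinite, hence deadlock free) the pair may pass $\calT$. If instead $(\rho,\rho')$ encodes a solution, no witness exists, neither branch ever reaches a $\tick$-place, and every maximal run of the finite composition stops at a place of $\calT$ not in $\tick$; so this pair does \emph{not} may pass $\calT$. This gives the displayed equivalence, and since $(\calN{$I$},\calT)$ is computable from $I$ and $\calT$ has two maximal runs, universal may testing for such tests is undecidable.

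The crux -- and the step I expect to be the main obstacle -- is the construction of $\calT$. It must (i) detect \emph{every} way a pair can fail to encode a solution, including a length difference between the two concatenations and a difference of the two index sequences; (ii) contain \emph{no unconditional route} to a $\tick$-place, since under \emph{may} testing a single such route would let every net pass $\calT$ and make the reduction vacuous; (iii) keep the non-terminating runs of $\calN{$I$}$ passing; and, most delicately, (iv) have only two maximal runs. Requirement (iv) is the tight one: the witness search is an inherently unbounded scan, so whatever branching it needs has to be pushed into $\calN{$I$}$ -- it is $\calN{$I$}$ that emits, event by event, the markers pinpointing the candidate failure position -- leaving $\calT$ with just its two top-level alternatives, which is why $\calN{$I$}$ in Fig.~\ref{fig:PCP-u-or-v} is nontrivial. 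Once both gadgets are fixed, proving the two implications of the equivalence is a routine, if lengthy, analysis of $\reach(\calN{}\,\|\,\mathcal{N}'\,\|\,\calT)$.
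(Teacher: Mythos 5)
You set up the right reduction -- the complement of PCP, the two-sided instance net of Fig.~\ref{fig:PCP-u-or-v}, a test consisting of an initial internal choice between two branches, and the equivalence ``$\forall\rho,\rho'$ may pass iff $I$ has no correspondence'' -- and this matches the paper's overall strategy. But the proposal stops exactly where the content of the theorem lies: the test $\mathcal{T}$ is never constructed, and you yourself flag its construction as the main obstacle. The design you sketch (branches that \emph{explicitly detect} a witness of failure and move into a $\tick$-place when they find one, split by ``which word overshoots the other'') is the hard way around: explicitly recognizing a letter mismatch, a length difference, or an index deviation at an unbounded position requires a choice at essentially every step of the scan, which is precisely what a test restricted to two maximal runs cannot afford. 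Your proposed escape -- having $\mathcal{N}_I$ ``emit, event by event, the markers pinpointing the candidate failure position'' -- is incompatible with the quantifier structure: $\mathcal{N}_I$ is the system under test, fixed by the PCP instance, and its runs are \emph{universally} quantified, so any extra guessing built into it only creates more pairs $(\rho,\rho')$ that must also may-pass; it cannot act as an existential oracle on behalf of the test. There is also a slip in your requirement (ii): in this setting success is not ``reaching a $\tick$-place'' but ``the maximal run is infinite, or it is finite and every causally final place of the test lies in $\tick$'', and this difference is exactly what the actual construction exploits.

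The paper's test $\mathcal{T}_{\mathit{PCP}}$ (Fig.~\ref{fig:test-PCP}) avoids explicit failure detection by inverting the roles of $\tick$: the two branches \emph{verify} the two halves of ``being a correspondence'' -- the left branch matches the letters of the two runs against each other ($a$ with $a'$, $b$ with $b'$), the right branch matches the chosen index sequences -- all intermediate places of the test are marked $\tick$, and only the single place reached after the concluding $\mathit{fu}$ and $\mathit{fv'}$ is not. Consequently any deviation between the two runs simply deadlocks that branch at a $\tick$-marked place (and a non-terminating run stays deadlock free), so the pair may pass; whereas a pair encoding a correspondence drives \emph{both} branches to the unique non-$\tick$ end, so it does not may pass. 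This single idea discharges your requirements (i)--(iv) at once and is what keeps the test down to two top-level alternatives (plus the self-loop transitions that let each branch ignore the symbols it is not checking). Without this inversion, or an equivalent replacement for your missing gadget, the proposal is a correct plan but not yet a proof.
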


\begin{proof}
We reduce the \emph{falsification} of the Post Correspondence Problem (PCP)~\cite{Pos46}
to universal may testing using a test with two maximal runs.
\hfill $\Box$
\end{proof}

We present the proof idea for the PCP over the alphabet $\{a,b\}$.
As an input, consider the set 
\[
  I = ((u_1,v_1), (u_2,v_2), (u_3,v_3)),
\]
of pairs of subwords, where
\[
  u_1 = ab,\, v_1 = bb,\ u_2 = a,\, v_2 = aba,\ u_3 = baa,\, v_3 = aa.
\]
The PCP with this input is solvable by the correspondence $(2,3,1,3)$ because
\[
  u_2 u_3 u_1 u_3 = a\,b\,a\,a\,a\,b\,b\,a\,a = v_2 v_3 v_1 v_3.
\]
The PCP input $I$ is simulated by the Petri net \calN{$I$} shown in Fig.~\ref{fig:PCP-u-or-v}.
It consists 

%%%%%%%%%%%%%%%%%%%%%%%%%%%%%%%%%%%%%%%%%%%%%%%%%%%%%%%%%%%%%%%%%%%%%%%%%%%%%%%%%%%%%%%%
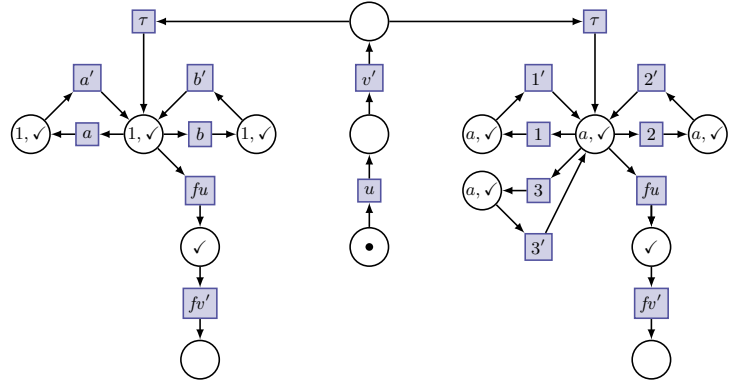
\begin{figure}[ht]

\begin{center}
\scalebox{0.75}{
  \begin{tikzpicture}

%%%%%%%%%%%%%%%%%%%%%%%%%%%%%%%%%%%%%%%%%%%%%%%%%
%  choice
%%%%%%%%%%%%%%%%%%%%%%%%%%%%%%%%%%%%%%%%%%%%%%%%%
    
% Transition 43
\node[transition,
    label=] (t43) at (4,3) {$v'$};  % t04
    
% Place 44
\node[place,
    label=] (p44) at (4,4) {};  % p14
    
% Transition 41
\node[transition,
    label=] (t41) at (4,1) {$u$};  % t24
    
% Place 42
\node[place,
    label=] (p42) at (4,2) {};  % p34
    
% Transition 04
\node[transition,
    label=] (t04) at (0,4) {$\tau$};  % t44
    
% Place 40
\node[place,
    tokens=1,
    label=] (p40) at (4,0) {}; % p64
    
% Transition 84
\node[transition,
    label=] (t84) at (8,4) {$\tau$};

%%%%%%%%%%%%%%%%%%%%%%%%%%%%%%%%%%%%%%%%%%%%%%%%%
%  checking symbols a and b
%%%%%%%%%%%%%%%%%%%%%%%%%%%%%%%%%%%%%%%%%%%%%%%%%

% Place m22
\node[place,
    label=] (pm22) at (-2,2) {$1,\tick$};
    
% Transition m13
\node[transition,
    label=] (tm13) at (-1,3) {$a'$};
    
% Place 02
\node[place,
    label=] (p02) at (0,2) {$1,\tick$};
    
% Transition 13
\node[transition,
    label=] (t13) at (1,3) {$b'$};
    
% Place 22
\node[place,
    label=] (p22) at (2,2) {$1,\tick$};

% Transition m12
\node[transition,
    label=] (tm12) at (-1,2) {$a$};

% Transition 12
\node[transition,
    label=] (t12) at (1,2) {$b$};

% Transition 11
\node[transition,
    label=] (t11) at (1,1) {$\mathit{fu}$};
    
% Place 10
\node[place,
    label=] (p10) at (1,0) {$\tick$};
    
% Transition 1m1
\node[transition,
    label=] (t1m1) at (1,-1) {$\mathit{fv'}$};
    
% Place 1m2
\node[place,
    label=] (p1m2) at (1,-2) {};
    
%%%%%%%%%%%%%%%%%%%%%%%%%%%%%%%%%%%%%%%%%%%%%%%%%
%  checking indices 1,2,3
%%%%%%%%%%%%%%%%%%%%%%%%%%%%%%%%%%%%%%%%%%%%%%%%%

% Place 62
\node[place,
    label=] (p62) at (6,2) {$a,\tick$};

% Transition 73
\node[transition,
    label=] (t73) at (7,3) {$1'$};
    
% Place 82
\node[place,
    label=] (p82) at (8,2) {$a,\tick$};
    
% Transition 93
\node[transition,
    label=] (t93) at (9,3) {$2'$};
    
% Place 102
\node[place,
    label=] (p102) at (10,2) {$a,\tick$};

% Transition 72
\node[transition,
    label=] (t72) at (7,2) {$1$};

% Transition 92
\node[transition,
    label=] (t92) at (9,2) {$2$};
    
%%%%%%%%%%%%%%%%%%%%%%%%%%%%%%%%%%%%%%%%%%%%%

% Place 61
\node[place,
    label=] (p61) at (6,1) {$a,\tick$};
    
% Transition 71
\node[transition,
    label=] (t71) at (7,1) {$3$};
    
% Transition 91
\node[transition,
    label=] (t91) at (9,1) {$\mathit{fu}$};
    
% Place 90
\node[place,
    label=] (p90) at (9,0) {$\tick$};

% Transition 70
\node[transition,
    label=] (t70) at (7,0) {$3'$};
    
% Transition 9m1
\node[transition,
    label=] (t9m1) at (9,-1) {$\mathit{fv'}$};
    
% Place 9m2
\node[place,
    label=] (p9m2) at (9,-2) {};

%%%%%%%%%%%%%%%%%%%%%%%%%%%%%%%%%%%%%%%%%%%%%%%%%
% Connect P-T
%%%%%%%%%%%%%%%%%%%%%%%%%%%%%%%%%%%%%%%%%%%%%%%%%

%%%%%%%%%% choice

\draw[-latex,thick] (p42) -- (t43);
\draw[-latex,thick] (p40) -- (t41);

\draw[-latex,thick] (p44) -- (t04);
\draw[-latex,thick] (p44) -- (t84);

%%%%%%%%%% left

\draw[-latex,thick] (p02) -- (tm12);
\draw[-latex,thick] (p02) -- (t12);
\draw[-latex,thick] (p02) -- (t11);

\draw[-latex,thick] (pm22) -- (tm13);

\draw[-latex,thick] (p22) -- (t13);

\draw[-latex,thick] (p10) -- (t1m1);

%%%%%%%%%% right

\draw[-latex,thick] (p82) -- (t72);
\draw[-latex,thick] (p82) -- (t71);
\draw[-latex,thick] (p82) -- (t92);
\draw[-latex,thick] (p82) -- (t91);

\draw[-latex,thick] (p62) -- (t73);

\draw[-latex,thick] (p102) -- (t93);

\draw[-latex,thick] (p61) -- (t70);

\draw[-latex,thick] (p90) -- (t9m1);

%%%%%%%%%%%%%%%%%%%%%%%%%%%%%%%%%%%%%%%%%%%%%%%%%
% Connect T-P
%%%%%%%%%%%%%%%%%%%%%%%%%%%%%%%%%%%%%%%%%%%%%%%%%

%%%%%%%%%% choice

\draw[-latex,thick] (t43) -- (p44);

\draw[-latex,thick] (t41) -- (p42);

\draw[-latex,thick] (t04) -- (p02);  

\draw[-latex,thick] (t84) -- (p82);

%%%%%%%%%% left

\draw[-latex,thick] (tm13) -- (p02);
\draw[-latex,thick] (t13) -- (p02);

\draw[-latex,thick] (t11) -- (p10);

\draw[-latex,thick] (tm12) -- (pm22);
\draw[-latex,thick] (t12) -- (p22);

\draw[-latex,thick] (t1m1) -- (p1m2);

%%%%%%%%%% right

\draw[-latex,thick] (t73) -- (p82);
\draw[-latex,thick] (t93) -- (p82);

\draw[-latex,thick] (t72) -- (p62);
\draw[-latex,thick] (t92) -- (p102);

\draw[-latex,thick] (t71) -- (p61); 
\draw[-latex,thick] (t91) -- (p90);

\draw[-latex,thick] (t70) -- (p82);
\draw[-latex,thick] (t91) -- (p90);

\draw[-latex,thick] (t9m1) -- (p9m2);
 
\end{tikzpicture}
}
\end{center}

\caption{Test \calT{$\mathit{PCP}$} for checking whether two runs of \calN{} do \emph{not} simulate a correspondence of the PCP.
The left branch ends in the place without $\tick$ if the runs produce letter by letter the same word, 
the right branch ends in the place without $\tick$ if the runs have chosen the same sequence of indices.}

\label{fig:test-PCP}

\end{figure}
%%%%%%%%%%%%%%%%%%%%%%%%%%%%%%%%%%%%%%%%%%%%%%%%%%%%%%%%%%%%%%%%%%%%%%%%%%%%%%%%%%%%%%%%

\noindent
of two branches that are selected by an initial choice between two internal actions.
For distinguishing them in a test, the left branch starts with a transition labeled with $u$ 
and the right branch with a transition labeled with $v$.
Afterwards, their tokens reside in their center places from where they can
nondeterministically choose which of the words $u_i$ or $v_i$ for $i \in \{1,2,3\}$
to perform next. For example, the left branch simulates the subword $u_1 = ab$ by the sequence
of actions $1$, $a$, and $b$, after which the token is again on the center place
so that the next choice can be performed.
After any finite number of choices each branch may stop its activity by performing the
transition labeled with $\mathit{fu}$ or $\mathit{fv}$, respectively.

\medskip

In general, the PCP with input $I$ simulated by a net \calN{$I$} of the form above
has \emph{no} correspondence if and only if 
\[
  \forall\, \rho, \rho'\, .\ \rho\, \|\, \rho' \mbox{ may pass } \calT{$\mathit{PCP}$}
\]
for the test \calT{$\mathit{PCP}$} shown in Fig.~\ref{fig:test-PCP}.

%%%%%%%%%%%%%%%%%%%%%%%%%%%%%%%%%%%%%%%%%%%%%%%%%%%%%%%%%%%%%%%%%%%
% wrapfigure
%%%%%%%%%%%%%%%%%%%%%%%%%%%%%%%%%%%%%%%%%%%%%%%%%%%%%%%%%%%%%%%%%%%

\begin{wrapfigure}[40]{r}{0.25\textwidth}
% erster Parameter:  Zeilenhöhe
\centering
\scalebox{0.5}{
 \begin{tikzpicture}

%%%%%%%%%%%%%%%%%%%%%%%%%%%%%%%%%%%%%%%%%%%%%%%%%%%%%
%  u-part
%%%%%%%%%%%%%%%%%%%%%%%%%%%%%%%%%%%%%%%%%%%%%%%%%%%%%

% Place u1
\node[place,
    tokens=1,
    label=] (pu1) at (1,0) {};
    
% Transition u1
\node[transition,
    label=] (tu1) at (1,-1) {$u$};
    
% Place u2
\node[place,
    label=] (pu2) at (1,-2) {};
    
% Transition u2
\node[transition,
    label=] (tu2) at (1,-3) {$2$};
    
% Place u3
\node[place,
    label=] (pu3) at (1,-4) {};
    
% Transition u3
\node[transition,
    label=] (tu3) at (1,-5) {$a$};
    
% Place u4
\node[place,
    label=] (pu4) at (1,-6) {};
    
% Transition u4
\node[transition,
    label=] (tu4) at (1,-7) {$3$};
    
% Place u5
\node[place,
    label=] (pu5) at (1,-8) {};
    
% Transition u5
\node[transition,
    label=] (tu5) at (1,-9) {$b$};
    
% Place u6
\node[place,
    label=] (pu6) at (1,-10) {};
    
% Transition u6
\node[transition,
    label=] (tu6) at (1,-11) {$a$};
    
% Place u7
\node[place,
    label=] (pu7) at (1,-12) {};
    
% Transition u7
\node[transition,
    label=] (tu7) at (1,-13) {$a$};
    
% Place u8
\node[place,
    label=] (pu8) at (1,-14) {};
    
% Transition u8
\node[transition,
    label=] (tu8) at (1,-15) {$1$};
    
% Place u9
\node[place,
    label=] (pu9) at (1,-16) {};
    
% Transition u9
\node[transition,
    label=] (tu9) at (1,-17) {$a$};
    
% Place u10
\node[place,
    label=] (pu10) at (1,-18) {};
    
% Transition u10
\node[transition,
    label=] (tu10) at (1,-19) {$b$};
    
% Place u11
\node[place,
    label=] (pu11) at (1,-20) {};
    
% Transition u11
\node[transition,
    label=] (tu11) at (1,-21) {$3$};
    
% Place u12
\node[place,
    label=] (pu12) at (1,-22) {};

% Transition u12
\node[transition,
    label=] (tu12) at (1,-23) {$b$};
    
% Place u13
\node[place,
    label=] (pu13) at (1,-24) {};
    
% Transition u13
\node[transition,
    label=] (tu13) at (1,-25) {$a$};
    
% Place u14
\node[place,
    label=] (pu14) at (1,-26) {};
    
% Transition u14
\node[transition,
    label=] (tu14) at (1,-27) {$a$};
    
% Place u15
\node[place,
    label=] (pu15) at (1,-28) {};

% Transition u15
\node[transition,
    label=] (tu15) at (1,-29) {$\mathit{fu}$};
    
% Place u16
\node[place,
    label=] (pu16) at (1,-30) {};

%%%%%%%%%%%%%%%%%%%%%%%%%%%%%%%%%%%%%%%%%%%%%%%%%%%%%
%  v-part
%%%%%%%%%%%%%%%%%%%%%%%%%%%%%%%%%%%%%%%%%%%%%%%%%%%%%   

% Place v1
\node[place,
    tokens=1,
    label=] (pv1) at (3,0) {};
    
% Transition v1
\node[transition,
    label=] (tv1) at (3,-1) {$v'$};
    
% Place v2
\node[place,
    label=] (pv2) at (3,-2) {};
    
% Transition v2
\node[transition,
    label=] (tv2) at (3,-3) {$2'$};
    
% Place v3
\node[place,
    label=] (pv3) at (3,-4) {};
    
% Transition v3
\node[transition,
    label=] (tv3) at (3,-5) {$a'$};
    
% Place v4
\node[place,
    label=] (pv4) at (3,-6) {};
    
% Transition v4
\node[transition,
    label=] (tv4) at (3,-7) {$b'$};
    
% Place v5
\node[place,
    label=] (pv5) at (3,-8) {};
    
% Transition v5
\node[transition,
    label=] (tv5) at (3,-9) {$a'$};
    
% Place v6
\node[place,
    label=] (pv6) at (3,-10) {};
    
% Transition v6
\node[transition,
    label=] (tv6) at (3,-11) {$3'$};
    
% Place v7
\node[place,
    label=] (pv7) at (3,-12) {};
    
% Transition v7
\node[transition,
    label=] (tv7) at (3,-13) {$a'$};
    
% Place v8
\node[place,
    label=] (pv8) at (3,-14) {};
    
% Transition v8
\node[transition,
    label=] (tv8) at (3,-15) {$a'$};
    
% Place v9
\node[place,
    label=] (pv9) at (3,-16) {};
    
% Transition v9
\node[transition,
    label=] (tv9) at (3,-17) {$1'$};
    
% Place v10
\node[place,
    label=] (pv10) at (3,-18) {};
    
% Transition v10
\node[transition,
    label=] (tv10) at (3,-19) {$b'$};
    
% Place v11
\node[place,
    label=] (pv11) at (3,-20) {};
    
% Transition v11
\node[transition,
    label=] (tv11) at (3,-21) {$b'$};
    
% Place v12
\node[place,
    label=] (pv12) at (3,-22) {};

% Transition v12
\node[transition,
    label=] (tv12) at (3,-23) {$3'$};
    
% Place v13
\node[place,
    label=] (pv13) at (3,-24) {};
    
% Transition v13
\node[transition,
    label=] (tv13) at (3,-25) {$a'$};
    
% Place v14
\node[place,
    label=] (pv14) at (3,-26) {};
    
% Transition v14
\node[transition,
    label=] (tv14) at (3,-27) {$a'$};
    
% Place v15
\node[place,
    label=] (pv15) at (3,-28) {};

% Transition v15
\node[transition,
    label=] (tv15) at (3,-29) {$\mathit{fv'}$};
    
% Place u16
\node[place,
    label=] (pv16) at (3,-30) {};

%%%%%%%%%%%%%%%%%%%%%%%%%%%%%%%%%%%%%%%%%%%%%%%%%
% Connect P-T
%%%%%%%%%%%%%%%%%%%%%%%%%%%%%%%%%%%%%%%%%%%%%%%%%

%%%%%%%%%%%%%%%%%%%%%%%%%%%%%%%%%%%%%%%%%%%%%%%%%%%%%
%  u-part
%%%%%%%%%%%%%%%%%%%%%%%%%%%%%%%%%%%%%%%%%%%%%%%%%%%%%

\draw[-latex,thick] (pu1) -- (tu1);
\draw[-latex,thick] (pu2) -- (tu2);
\draw[-latex,thick] (pu3) -- (tu3);
\draw[-latex,thick] (pu4) -- (tu4);
\draw[-latex,thick] (pu5) -- (tu5);

\draw[-latex,thick] (pu6) -- (tu6);
\draw[-latex,thick] (pu7) -- (tu7);
\draw[-latex,thick] (pu8) -- (tu8);
\draw[-latex,thick] (pu9) -- (tu9);
\draw[-latex,thick] (pu10) -- (tu10);

\draw[-latex,thick] (pu11) -- (tu11);
\draw[-latex,thick] (pu12) -- (tu12);
\draw[-latex,thick] (pu13) -- (tu13);
\draw[-latex,thick] (pu14) -- (tu14);
\draw[-latex,thick] (pu15) -- (tu15);

%%%%%%%%%%%%%%%%%%%%%%%%%%%%%%%%%%%%%%%%%%%%%%%%%%%%%
%  v-part
%%%%%%%%%%%%%%%%%%%%%%%%%%%%%%%%%%%%%%%%%%%%%%%%%%%%%

\draw[-latex,thick] (pv1) -- (tv1);
\draw[-latex,thick] (pv2) -- (tv2);
\draw[-latex,thick] (pv3) -- (tv3);
\draw[-latex,thick] (pv4) -- (tv4);
\draw[-latex,thick] (pv5) -- (tv5);

\draw[-latex,thick] (pv6) -- (tv6);
\draw[-latex,thick] (pv7) -- (tv7);
\draw[-latex,thick] (pv8) -- (tv8);
\draw[-latex,thick] (pv9) -- (tv9);
\draw[-latex,thick] (pv10) -- (tv10);

\draw[-latex,thick] (pv11) -- (tv11);
\draw[-latex,thick] (pv12) -- (tv12);
\draw[-latex,thick] (pv13) -- (tv13);
\draw[-latex,thick] (pv14) -- (tv14);
\draw[-latex,thick] (pv15) -- (tv15);

%%%%%%%%%%%%%%%%%%%%%%%%%%%%%%%%%%%%%%%%%%%%%%%%%
% Connect T-P
%%%%%%%%%%%%%%%%%%%%%%%%%%%%%%%%%%%%%%%%%%%%%%%%%

%%%%%%%%%%%%%%%%%%%%%%%%%%%%%%%%%%%%%%%%%%%%%%%%%%%%%
%  u-part
%%%%%%%%%%%%%%%%%%%%%%%%%%%%%%%%%%%%%%%%%%%%%%%%%%%%%

\draw[-latex,thick] (tu1) -- (pu2);
\draw[-latex,thick] (tu2) -- (pu3);
\draw[-latex,thick] (tu3) -- (pu4);
\draw[-latex,thick] (tu4) -- (pu5);
\draw[-latex,thick] (tu5) -- (pu6);

\draw[-latex,thick] (tu6) -- (pu7);
\draw[-latex,thick] (tu7) -- (pu8);
\draw[-latex,thick] (tu8) -- (pu9);
\draw[-latex,thick] (tu9) -- (pu10);
\draw[-latex,thick] (tu10) -- (pu11);

\draw[-latex,thick] (tu11) -- (pu12);
\draw[-latex,thick] (tu12) -- (pu13);
\draw[-latex,thick] (tu13) -- (pu14);
\draw[-latex,thick] (tu14) -- (pu15);
\draw[-latex,thick] (tu15) -- (pu16);

%%%%%%%%%%%%%%%%%%%%%%%%%%%%%%%%%%%%%%%%%%%%%%%%%%%%%
%  v-part
%%%%%%%%%%%%%%%%%%%%%%%%%%%%%%%%%%%%%%%%%%%%%%%%%%%%%

\draw[-latex,thick] (tv1) -- (pv2);
\draw[-latex,thick] (tv2) -- (pv3);
\draw[-latex,thick] (tv3) -- (pv4);
\draw[-latex,thick] (tv4) -- (pv5);
\draw[-latex,thick] (tv5) -- (pv6);

\draw[-latex,thick] (tv6) -- (pv7);
\draw[-latex,thick] (tv7) -- (pv8);
\draw[-latex,thick] (tv8) -- (pv9);
\draw[-latex,thick] (tv9) -- (pv10);
\draw[-latex,thick] (tv10) -- (pv11);

\draw[-latex,thick] (tv11) -- (pv12);
\draw[-latex,thick] (tv12) -- (pv13);
\draw[-latex,thick] (tv13) -- (pv14);
\draw[-latex,thick] (tv14) -- (pv15);
\draw[-latex,thick] (tv15) -- (pv16);

\end{tikzpicture}
}
 
\caption{Maximal runs of \calN{} simulating the correspondence $(2,3,1,3)$.}

\label{fig:PCP-correspondence}

%\end{center}

\end{wrapfigure}
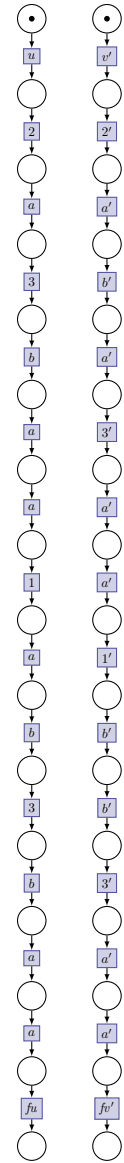

%%%%%%%%%%%%%%%%%%%%%%%%%%%%%%%%%%%%%%%%%%%%%%%%%%%%%%%%%%%%%%%%%%%

By contraposition, if the PCP has a correspondence,
there exist maximal runs $\rho$ and $\rho'$ of \calN{$I$} with nets \calN{} and $\mathcal{N}'$
such that the two maximal runs in $\calN{}\, \|\, \mathcal{N}'\, \|\, \calT{$\mathit{PCP}$}$
stemming from the two branches in \calT{$\mathit{PCP}$}
are \emph{not} sucessful, i.e., each branch ends in the unique place that is not
marked by~$\tick$.

The left branch of \calT{$\mathit{PCP}$} ends in the place without $\tick$ if $\rho$ and $\rho'$ produce letter by letter the same word.
Here the transitions labeled with unprimed symbols refer to $\rho$ and transitions labeled
with primed symbols refer to $\rho'$. The initial transitions labeled with $u$ and $v'$ ensure
that the unprimed symbols refer to the left part of \calN{$I$} simulating the $u$-part 
and that the primed symbols refer to
(the primed version of) right part of \calN{$I$} simulating the $v$-part of the proposed correspondence.
Since the correspondence is finite, this branch of the test 
ends in the place without $\tick$ after performing $\mathit{fu}$ and $\mathit{fv'}$.

The right branch of \calT{$\mathit{PCP}$} ends in the place without $\tick$ if $\rho$ and $\rho'$ have chosen the same sequence of indices
$1,2,3$ in producing the common word. Note that this branch checks the same runs $\rho$ and $\rho'$
than the left branch because $\rho$ and $\rho'$ are fixed initially.

There is one technical detail. Whereas the runs $\rho$ and $\rho'$ have no symbols in common
because $\rho$ uses only unprimed symbols and $\rho'$ only primed versions of the symbols, 
the test \calT{$\mathit{PCP}$} synchronizes in the parallel composition with $\mathcal{N}\, \|\, \mathcal{N}'$
on all its symbols except~$\tau$, i.e., on 
$a,b,a',b',u,v',\mathit{fu}, \mathit{fv'}, 1,2,3,1',2',3'$.  
To avoid unintended deadlocks we have to
enable the left branch of \calT{$\mathit{PCP}$} to be able to synchronize at every
place marked with $1$ with any transition lableled with $1,2,3,1',2'$ or $3'$,
and vice versa, the right branch of \calT{$\mathit{PCP}$} to be able to synchronize at every
place marked with $a$ with any transition lableled with $a,b,a',b',u$ or $v'$.
To enhance visibility, we dropped the loop transitions attached to these places
allowing for these synchronizations.

For the example input $I$, Fig.~\ref{fig:PCP-correspondence} shows two
maximal runs of \calN{$I$}, one with the original symbols and one with primed symbols, 
that simulate the correspondence (2,3,1,3) and 
cause the test \calT{$\mathit{PCP}$} to end for each branch in the place that is not marked 
$\tick$.

\section{Conclusion}
\label{sec:conclusion}

We introduced the notion of \emph{concurrent hyperproperties} as sets of sets
of concurrent traces. This extends classical hyperproperties, which are
sets of sets of traces. For analyzing concurrent hyperproperties,
we used Petri nets as the underlying semantic model of concurrency. 
The analysis was performed by adapting \emph{may and must testing} 
originally developed by DeNicola and Hennessy to our setting. 
Several examples illuminated the details of our approach.

As future work we envisage the introduction of suitable logics for
specifying concurrent hyperproperties, extending HyperLTL
for hyperproperties on traces (see \cite{Fin17} for an overview).
A starting point could be event structure logic~\cite{MukundT92,Pen95}.

\paragraph{Acknowledgement.}
This work was supported by the European Research Council (ERC) Grant HYPER (No. 101055412).

%%%%%%%%%%%%%%%%%%%%%%%%%%%%%%%%%%%%%%%%%%%%%%%%%%%%%%%%%%%%%%%%%%%

\bibliographystyle{./llncs/splncs04}
\bibliography{references}

\begin{thebibliography}{10}
\providecommand{\url}[1]{\texttt{#1}}
\providecommand{\urlprefix}{URL }
\providecommand{\doi}[1]{https://doi.org/#1}

\bibitem{AichHe07}
Aichernig, B.K., He, J.: Refinement and test case generation in {UTP}. In:
  Aichernig, B.K., Boiten, E.A., Derrick, J., Groves, L. (eds.) Proceedings of
  the 11th Refinement Workshop, Refine@ICFEM 2006, Macao. Electronic Notes in
  Theoretical Computer Science, vol.~187, pp. 125--143. Elsevier (2006).
  \doi{10.1016/j.entcs.2006.08.048}

\bibitem{BF88}
Best, E., Fern\'{a}ndez, C.: Nonsequential Processes. Springer (1988).
  \doi{10.1007/978-3-642-73483-0}

\bibitem{busi_gorrieri_2009}
Busi, N., Gorrieri, R.: Structural non-interference in elementary and trace
  nets. Mathematical Structures in Computer Science  \textbf{19}(6),
  1065–1090 (2009). \doi{10.1017/S0960129509990120}

\bibitem{ClaSch10}
Clarkson, M.R., Schneider, F.B.: Hyperproperties. J. Comput. Secur.
  \textbf{18}(6),  1157--1210 (2010),
  \url{https://doi.org/10.3233/JCS-2009-0393}

\bibitem{DH84}
DeNicola, R., Hennessy, M.: Testing equivalences for processes. TCS
  \textbf{34},  83--134 (1984). \doi{10.1016/0304-3975(84)90113-0}

\bibitem{EspNie94}
Esparza, J., Nielsen, M.: Decidability issues for {Petri} nets -- a survey.
  Bull. {EATCS}  \textbf{52},  244--262 (1994)

\bibitem{Fin17}
Finkbeiner, B.: Temporal hyperproperties. Bull. {EATCS}  \textbf{123} (2017),
  \url{http://eatcs.org/beatcs/index.php/beatcs/article/view/514}

\bibitem{GoguenMeseguer:1982:NI}
Goguen, J.A., Meseguer, J.: Security policies and security models. In: Proc.
  IEEE Symposium on Security and Privacy. pp. 11--20. {IEEE} Computer Society
  (April 1982). \doi{10.1109/SP.1982.10014}

\bibitem{Gol88b}
Goltz, U.: On representing {CCS} programs by finite {P}etri nets. In: Chytil,
  M., Janiga, L., Koubek, V. (eds.) Proc. Math. Found. of Comput. Sci. 1988,.
  LNCS, vol.~324, pp. 339--350. Springer (1988),
  \url{https://doi.org/10.1007/BFb0017157}

\bibitem{ProCoS94}
He, J., Hoare, C.A.R., Fr{\"a}nzle, M., M{\"u}ller-Olm, M., Olderog, E.R.,
  Schenke, M., Hansen, M.R., Ravn, A.P., Rischel, H.: Provably correct systems.
  In: Langmaack, H., de~Roever, W.P., Vytopil, J. (eds.) FTRTFT. LNCS,
  vol.~863, pp. 288--335. Springer (1994). \doi{10.1007/3-540-58468-4}

\bibitem{Hen88}
Hennessy, M.: Algebraic Theory of Processes. MIT Press (1988)

\bibitem{HoaHe98}
Hoare, C.A.R., He, J.: Unifying Theories of Programming. Prentice Hall (1998)

\bibitem{Hoare80a}
Hoare, C.: A model for communicating sequential processes. In: McKeag, R.,
  MacNaughten, A. (eds.) On the Construction of Programs, pp. 229--254.
  Cambridge University Press (1980)

\bibitem{May84}
Mayr, E.W.: An algorithm for the general {Petri} net reachability problem.
  {SIAM} J. Comput.  \textbf{13}(3),  441--460 (1984). \doi{10.1137/0213029}

\bibitem{Mazurkiewicz_1977}
Mazurkiewicz, A.: Concurrent program schemes and their interpretations. DAIMI
  Report Series  \textbf{6}(78) (July 1977). \doi{10.7146/dpb.v6i78.7691}

\bibitem{McCullough:1987:GNI}
McCullough, D.: Noninterference and the composability of security properties.
  In: Proc. IEEE Symposium on Security and Privacy. pp. 177--186. {IEEE}
  Computer Society (April 1988). \doi{10.1109/SECPRI.1988.8110}

\bibitem{milner_calculus_1980}
Milner, R.: A {Calculus} of {Communicating} {Systems}, LNCS, vol.~92. Springer
  (1980), \url{http://link.springer.com/10.1007/3-540-10235-3}

\bibitem{MukundT92}
Mukund, M., Thiagarajan, P.S.: A logical characterization of well branching
  event structures. Theor. Comput. Sci.  \textbf{96}(1),  35--72 (1992),
  \url{https://doi.org/10.1016/0304-3975(92)90181-E}

\bibitem{Old91}
Olderog, E.R.: Nets, Terms and Formulas: Three Views of Concurrent Processes
  and Their Relationship. Cambridge University Press (1991).
  \doi{10.1017/CBO9780511526589}

\bibitem{Pen95}
Penczek, W.: Branching time and partial order in temporal logics. In: Bolc, L.,
  Szalas, A. (eds.) Time and Logics: A Computational Approach, pp. 203--257.
  UCL Press Ltd. (1995)

\bibitem{Pet77}
Petri, C.: Non-sequential processes. Tech. Rep. Internal Report GMD-ISF-77-5,
  Gesellschaft Math. Datenverarb., St. Augustin (1977)

\bibitem{Pos46}
Post, E.L.: A variant of a recursively unsolvable problem. Bull. Am. Math. Soc.
   \textbf{54}(4),  264--268 (1946). \doi{10.1007/978-3-642-19835-9}

\bibitem{Pratt84}
Pratt, V.R.: The pomset model of parallel processes: Unifying the temporal and
  the spatial. In: Brookes, S.D., Roscoe, A.W., Winskel, G. (eds.) Seminar on
  Concurrency, Carnegie-Mellon University. LNCS, vol.~197, pp. 180--196.
  Springer (1985). \doi{10.1007/3-540-15670-4}

\bibitem{Rei85}
Reisig, W.: Petri Nets -- An Introduction. Springer (1985).
  \doi{10.1007/978-3-642-69968-9}

\bibitem{SuFPHeS15}
Su, T., Fu, Z., Pu, G., He, J., Su, Z.: Combining symbolic execution and model
  checking for data flow testing. In: Bertolino, A., Canfora, G., Elbaum, S.G.
  (eds.) 37th {IEEE/ACM} International Conference on Software Engineering,
  {ICSE} 2015, Volume 1. pp. 654--665. {IEEE} Computer Society (2015).
  \doi{10.1109/ICSE.2015.81}

\bibitem{SuWMPHeCS17}
Su, T., Wu, K., Miao, W., Pu, G., He, J., Chen, Y., Su, Z.: A survey on
  data-flow testing. {ACM} Comput. Surv.  \textbf{50}(1),  5:1--5:35 (2017).
  \doi{10.1145/3020266}

\bibitem{UCAM-CL-TR-95}
Winskel, G.: {Event structures: Lecture notes for the Advanced Course on
  {Petri} Nets}. Tech. Rep. UCAM-CL-TR-95, University of Cambridge, Computer
  Laboratory (July 1986),
  \url{https://www.cl.cam.ac.uk/techreports/UCAM-CL-TR-95.pdf}

\end{thebibliography}

\end{document}